\documentclass[10pt, conference, letterpaper]{IEEEtran}

\usepackage{trackchanges}
\usepackage{booktabs}
\usepackage{wrapfig,graphicx}
\usepackage{lipsum}
\usepackage{multirow}
\usepackage[ruled, linesnumbered, algo2e]{algorithm2e}
\definecolor{modified}{RGB}{56, 103, 255}
\soulregister\ref7
\soulregister\cite7
\soulregister\eg7
\mathchardef\Gamma="0100 \mathchardef\Delta="0101
\mathchardef\Theta="0102 \mathchardef\Lambda="0103
\mathchardef\Xi="0104 \mathchardef\Pi="0105
\mathchardef\Sigma="0106 \mathchardef\Upsilon="0107
\mathchardef\Phi="0108 \mathchardef\Psi="0109
\mathchardef\Omega="010A

\newcommand{\outline}[1]{}

\usepackage{cite}
\usepackage{xspace}
\usepackage{url}
\usepackage{graphicx}
\usepackage{latexsym}
\usepackage{amssymb}
\usepackage{amsfonts}
\usepackage{psfrag}
\usepackage{wrapfig}

\usepackage{algorithm}
\usepackage{algorithmicx}
\usepackage{algpseudocode}
\usepackage{alltt}
\usepackage{color}
\usepackage{array}
\usepackage{mdwmath}
\usepackage{mdwtab}
\usepackage[tight,footnotesize]{subfigure}
\usepackage{amsfonts}

\usepackage{amssymb}
\usepackage{tabularx}
\usepackage{mathrsfs}
\usepackage{pdfsync}
\usepackage{amsmath,bm}
\usepackage{caption2}
\usepackage{cite}
\usepackage{textcomp}


\newcommand{\ie}{\emph{i.e.}\xspace}
\newcommand{\eg}{\emph{e.g.}\xspace}

\newcommand{\etal}{\frenchspacing{}\emph{et al{.}}\xspace}

\newtheorem{lemma}{\textbf{Lemma}}[section]

\newtheorem{theorem}{\textbf{Theorem}}[section]




\algblock{ParFor}{EndParFor}
\algnewcommand\algorithmicparfor{\textbf{parfor}}
\algnewcommand\algorithmicpardo{\textbf{do}}
\algnewcommand\algorithmicendparfor{\textbf{end\ parfor}}
\algrenewtext{ParFor}[1]{\algorithmicparfor\ #1\ \algorithmicpardo}
\algrenewtext{EndParFor}{\algorithmicendparfor}

\setcounter{totalnumber}{5}


\newcommand{\RNum}[1]{\uppercase\expandafter{\romannumeral #1\relax}}

\addeditor{lm}


\begin{document}

\title{Hash Adaptive Bloom Filter
\vspace{-1mm}
}

\author{\IEEEauthorblockN{
Rongbiao Xie$^{1\dagger}$~~
Meng Li$^{1\dagger}$~~
Zheyu Miao$^{2}$~~
Rong Gu$^{1*}$~~
He Huang$^3$~~
Haipeng Dai$^{1*}$~~
Guihai Chen$^1$
}
\IEEEauthorblockA{State Key Laboratory for Novel Software Technology, Nanjing University, Nanjing, Jiangsu 210023, CHINA$^1$, \\
Zhejiang University, Hangzhou, Zhejiang 310058, CHINA$^2$,\\
School of Computer Science and Technology, Soochow University, Suzhou, Jiangsu 215006, CHINA$^3$}
\{rongbiaoxie, menson\}@smail.nju.edu.cn, \{gurong, haipengdai, gchen\}@nju.edu.cn, \\ predatory@zju.edu.cn, huangh@suda.edu.cn
\vspace{4mm}
}

\renewcommand{\thefootnote}{\fnsymbol{footnote}}
\maketitle
\newcommand\blfootnote[1]{%
\begingroup
\renewcommand\thefootnote{}\footnote{#1}%
\addtocounter{footnote}{-1}%
\endgroup
}

\blfootnote{$\dagger$ R. Xie and M. Li are the co-first authors. $*$ R. Gu and H. Dai are the corresponding authors.}
{\sloppy
\vspace{-6mm}
\begin{abstract}
Bloom filter is a compact memory-efficient probabilistic data structure supporting membership testing, \ie, to check whether an element is in a given set.
%
However, as Bloom filter maps each element with uniformly random hash functions, few flexibilities are provided even if the information of negative keys (elements are not in the set) are available.
The problem gets worse when the misidentification of negative keys brings different costs.
To address the above problems, we propose a new \underbar{H}ash \underbar{A}daptive \underbar{B}loom \underbar{F}ilter (HABF) that supports the customization of hash functions for keys.
The key idea of HABF is to customize the hash functions for positive keys (elements are in the set) to avoid negative keys with high cost, and pack customized hash functions into a lightweight data structure named HashExpressor.
Then, given an element at query time, HABF follows a two-round pattern to check whether the element is in the set.
Further, we theoretically analyze the performance of HABF and bound the expected false positive rate.
We conduct extensive experiments on representative datasets, and the results show that HABF outperforms the standard Bloom filter and its cutting-edge variants on the whole in terms of accuracy, construction time, query time, and memory space consumption (Note that source codes are available in \cite{ourSourceCode}).
\end{abstract}

\section{Introduction}
\label{sec:intro}
Membership testing problem refers to testing whether an item is in a given set.
It is a fundamental problem in numerous applications such as big data applications and databases, where the query latency, memory consumption, and accuracy are the primary performance indicators.
To address the problem, a lightweight probabilistic data structure named Bloom filter, with a bit vector of length $m$ as the underlying data structure, is proposed \cite{Bloom1970Space}.
To insert an item into Bloom filter, the item is mapped by $k$ hash functions to $k$ bits in the bit vector, and all $k$ mapped bits are set to $1$.
To query an item, the item is mapped by $k$ hash functions, and it is considered to be a member if all $k$ mapped bits are $1$.
Due to the compact space efficiency and satisfactory accuracy, Bloom filter has been the common practice in many applications.
For example, it is used to avoid unnecessary I/O overhead \cite{sears2012blsm} base on log-structured merge (LSM) tree \cite{LSMtree} (\eg, LevelDB \cite{LevelDB} and RocksDB \cite{RocksDB}) in some key-value databases;
to reduce communication cost in a distributed database \cite{mackert1994r*}, 
and to prevent Distributed Denial-of-Service (DDoS) in network security \cite{xiao2006novel}.
However, there is a small probability for Bloom filter to mistakenly identify a negative key (\ie, a key is not in the set) as a positive key (\ie, a key is in the set), which is called false positive~\cite{graf2020xor}.
Targeting at decreasing the number of false positives, a lot of techniques~\cite{graf2020xor, cohen2003spectral, guo2009dynamic, kirsch2006less, hao2007building, deng2006approximately, mitzenmacher2002compressed} have been developed over the past decades.
However, these works shared a similar technical path, \ie, reducing the false positives by leveraging the randomness (in terms of hash function mapping) while ignoring the availability of negative keys in many systems~\cite{kraska2018the,mitzenmacher2018a, 2019adaptive, rae2019meta, bhattacharya2020adaptive}.
For example, for intrusion detection, malicious IP address statistics can be obtained from access logs or some well-known online real-time Blacklists such as URIBL~\cite{URIBL};
and for LSM-tree-based key-value databases~\cite{LevelDB, RocksDB}, the frequently failed queries with heavy I/O overhead can be cached to reduce extra disk accesses.
Unfortunately, such negative key information is hardly utilized~\cite{graf2020xor, cohen2003spectral, guo2009dynamic, kirsch2006less, hao2007building, deng2006approximately, mitzenmacher2002compressed}.

Recently, Bloom filters empowered by machine learning (ML) techniques~\cite{kraska2018the, mitzenmacher2018a, 2019adaptive, rae2019meta, bhattacharya2020adaptive} are proposed to take advantage of the keys information (including negative keys) by introducing in learned models.
However, they suffer from the explosive growth of latency, for query and insert operations (\eg, $400\times$ of standard Bloom filter~\cite{rae2019meta}), which is quite intolerant for many latency-sensitive applications.
Not to mention the extra computation overhead incurred by the time-consuming training phase of learned models, which although can be alleviated with hardware (\eg, GPU and TPU).
Therefore, how to build a practical filter that takes advantage of the negative keys information remains unknown.
Besides, we consider one further problem, that is, in many real-world systems, the misidentification of negative keys (\ie, false positives) brings cost, which may even be highly skewed from one key to another~\cite{babcock2003distributed, cormode2005s, wu2020ac}.
For example, Internet traffics is highly skewed and concentrates on some popular files \cite{breslau1999web}, and popular files will bring more communication costs than unpopular files.
In LevelDB, accessing data in different levels incurs significantly different I/O costs from disk accessing \cite{zhang2018elasticbf}.
Besides, some cost information can be or is already being monitored~\cite{babcock2003distributed, cormode2005s, wu2020ac, breslau1999web, zhang2018elasticbf, 2006Weighted}.
However, Bloom filter cannot directly utilize such cost information because it treats all keys equally by sharing identical fixed $k$ hash functions.
Note that the situation may even get worse if the shared hash functions are not uniformly random or even skewed.

Before diving into our proposed solutions, we formally define the problem considered in this paper as follows:
%
%
\textit{Suppose the positive key set is denoted by $S$, the negative key set is denoted by $O$, the global hash function set is $H$, the number of hash functions is $k$, and for a certain key $e$, the cost of $e$ is $\Theta(e)$.
Our problem is how to build a Bloom filter so that the overall cost of false positives from $O$ is minimized?}
To address the problem, we propose a new solution, \ie, customizing hash functions for each key in Bloom filter individually according to the given negative keys and their costs information.
To be specific, we aim to select a hash function set of size $k$ for each key from $H$ to construct a Bloom filter so that the overall cost of false positives from $O$ is minimized.
Besides, the hash customization mechanism avoids the performance degradation from hash function skewness.

\begin{figure}[t]
	\vspace{2mm}
	\centering
	\setlength{\abovecaptionskip}{5pt}
	\includegraphics[width=0.7\linewidth]{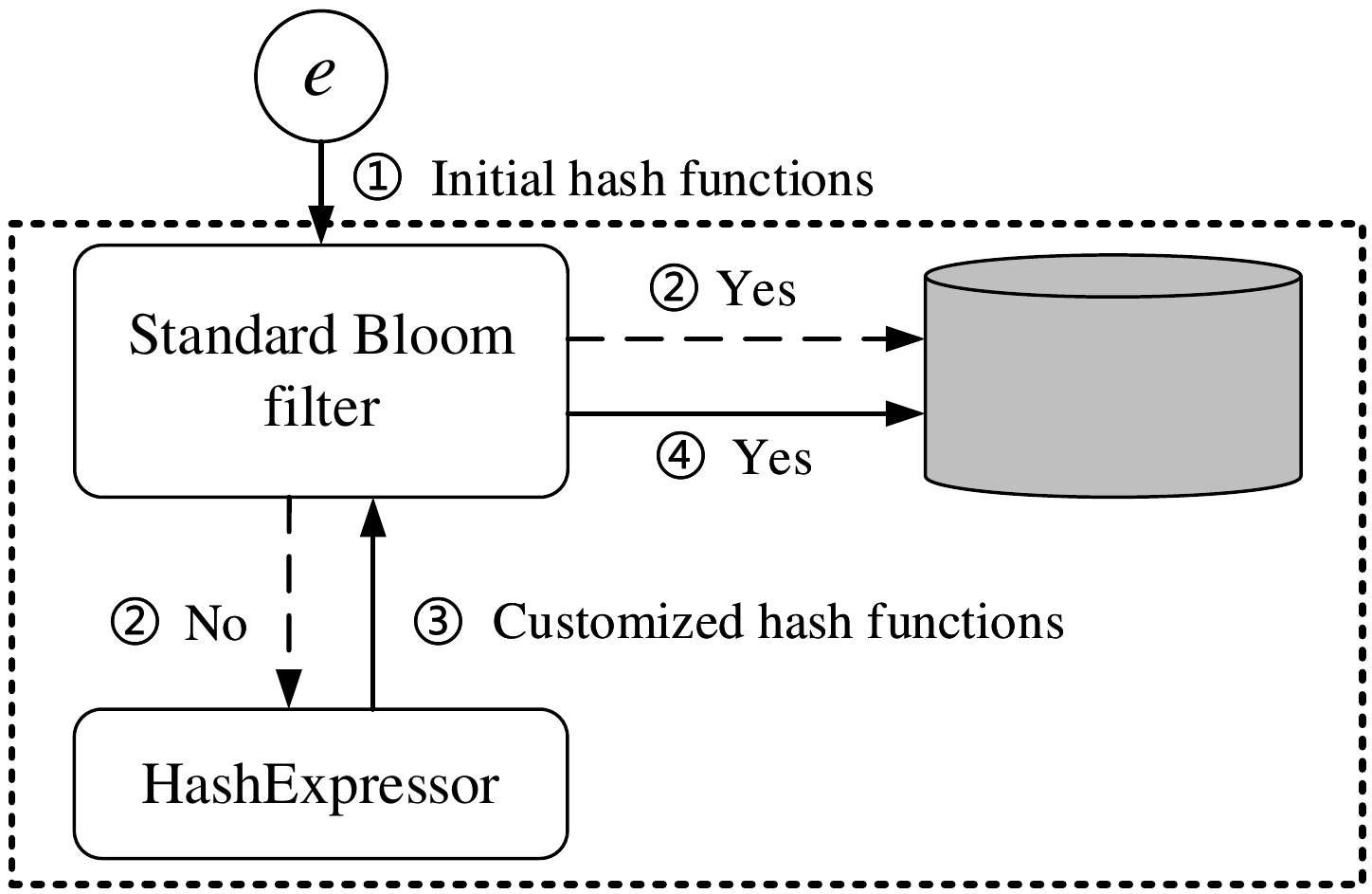}
	\vspace{0mm}
	\caption{Architecture of HABF}
	\label{fig:HABFstructure}
	\vspace{-10mm}
\end{figure}

In this paper, we propose a novel structure named Hash Adaptive Bloom Filter (HABF), which consists of two parts including a standard Bloom filter and a novel lightweight hash table named HashExpressor, as shown in Fig. \ref{fig:HABFstructure}.
The key idea of HABF is to customize and store the hash functions at construction time and then obtain customized hash functions at query time.
Note that storing hash functions for each key is non-trivial as it consumes large memory.
Instead, HABF allocates \emph{initial hash functions} for each key, and then adjusts the hash functions for only a small portion of positive keys that incur false positives.
%
During the construction time, we first allocate all keys with $k$ random \emph{initial hash functions} from the global hash functions collection that is available in Table \ref{Tab:hashfunc}, and then optimize hash function selections for positive keys with our proposed \underbar{T}wo-\underbar{P}hase \underbar{J}oint \underbar{O}ptimization (TPJO) algorithm, which is greedy-based but with performance bound.
After obtaining the optimal hash function selections, we pack them into the previously mentioned lightweight hash table HashExpressor.
%
During the query time, a key $e$ first applies \emph{initial hash functions} to check whether it is positive.
If yes, $e$ is believed to be positive.
If no, we query a new set of hash functions from HashExpressor and check with Bloom filter again.
If yes, $e$ is also believed to be positive; otherwise, $e$ is considered to be negative.
Following such a two-round pattern, HABF has no false negatives as the standard Bloom filter.

As far as we know, there is no prior work on customizing hash functions for keys to address the defined problem above.
The most related work is to group keys into disjoint subsets and use a different hash function set for each subset \cite{hao2007building} to decrease the number of bits equals $1$ and optimize false positive rate (FPR).
In a sense, it is only a special case of customizing hash functions.
Considering that HABF customizes hash functions according to negative keys and their cost, the information needs to be known during construction time.

\textbf{Challenges.} In this paper, we are mainly faced with three challenges.
The first challenge is how to customize hash functions for positive keys to minimize the overall cost of our problem, as a brute-force search brings exponential complexity.
To address the challenge, we propose a performance-bounded greedy-based algorithm named TPJO to find the optimal hash functions.
The second challenge is how to store the optimal hash functions of adjusted keys without incurring heavy space overhead.
To address the challenge, we design a hash table named HashExpressor by sharing the same space.
The third challenge is how to ensure that HABF inherits the nice query performance of the standard Bloom filter, \ie, no false negative rate (FNR) and a small FPR.
Considering that each key in HABF is mapped with \emph{initial hash functions} or customized hash functions in HashExpressor, to address this challenge, the query of HABF follows a two-round pattern.
A key is negative if and only if it is checked not to be in the set after the two-round query.
\textbf{Contributions.} Our principal contributions can be summarized as follows.
Firstly, we consider the scenarios where the information of negative keys and cost can be obtained, and we propose a novel framework named HABF.
Secondly, we theoretically analyze the performance of HABF and bound the expected false positive rate.
Thirdly, we evaluate the proposed framework on representative datasets to validate its effectiveness and efficiency.
The results show that our HABF achieves high accuracy and low cost under the scenarios that the negative keys and their costs information can be obtained when using the same space size.
The rest of this paper is organized as follows.
We first review related works in Section \ref{sec:related}.
Then we present the architecture of HABF together with the construction/query procedure in Section \ref{sec:model}.
Next, we give the theoretical analysis in Section \ref{sec:analysis}.
After that, we present our experimental result in Section \ref{sec:eval}.
Finally, we conclude our work in Section \ref{sec:conclusion}.

\vspace{2mm}
\section{Related Work}
\label{sec:related}
In this section, we first review the standard Bloom filter \cite{Bloom1970Space}, and then three types of variants closely related to our work.

\textbf{Bloom filter.}
The standard Bloom filter \cite{Bloom1970Space} has a bit array as the underlying data structure, and supports membership testing query.
Bloom filter provides a one-side error guarantee, \ie, small FPR and zero FNR.
To be specific, if a key is indicated to be absent in the set by the query result, it is definitely not in the set (zero FNR).
In contrast, if the key is indicated to be in the set, it is actually not in the set with a small error probability (FPR).
Given the number of bits allocated for each key (bits-per-key) $b$, the FPR can be formulated as $(1-e^{-\frac{k}{b}})^k$\cite{kirsch2006less} and achieves its minimum value of $0.6185^b$ when $k=ln2 \cdot b$.
Unfortunately, as Bloom filter shares $k$ identical hash functions across all keys, it is insensitive to the information of negative keys and cost.

\textbf{Hash function/fingerprint-based.}
Gosselin-Lavigne \etal evaluated different hash functions and selected several optimal ones in terms of FPR as the default functions for Bloom filter \cite{2015A}.
However, they only aimed at seeking hash functions with better implementations.
Hao \etal proposed to group keys into disjoint subsets and used a different set of hash functions for each subset \cite{hao2007building}.
In contrast, we can achieve fine-grained hash functions customization for each key.
For static datasets, Broder \etal proposed to store a fingerprint of each key in its corresponding hash location \cite{broder2004network} by designing a \emph{perfect hash function} to achieve optimal memory usage.
The fingerprint is generated by a hash function, and a key is considered to be positive only when its fingerprint is matched.
Nonetheless, the construction incurs heavy computation overhead.
Recently, a new filter named Xor filter~\cite{graf2020xor} is proposed with optimal memory usage.
However, no further performance gain is achieved by it when negative keys and costs are known.

\textbf{Cost-based.}
Considering the cost of different keys, Bruck \etal proposed Weighted Bloom filter (WBF) to reduce the overall cost by setting the number of hash functions for each key according to its cost.\cite{2006Weighted}.
However, when it comes to the query phase, it relies heavily on the cost to calculate the number of used hash functions for each key, which accordingly incurs large additional memory consumption and high query latency from storing and retrieving cost information.
Zhong \etal also studied how to adjust the number of hash functions based on cost of keys, and posed it as a constrained nonlinear integer programming problem together with two polynomial-time approximation solutions\cite{zhong2008optimizing}.
Similarly, this method incurs heavy space overhead to store the optimized number of hash functions, and high query latency when retrieving them.
%
%
ElasticBF considers a different case where data (key and value) are stored in the multi-level LSM tree \cite{LSMtree}.
To relieve the I/O cost brought by accessing hot data in different levels, ElasticBF proposes to construct multiple small Bloom filters for each level and dynamically load the filter into memory as needed to achieve a fine-grained and elastic control on memory usage \cite{zhang2018elasticbf}.
However, ElasticBF only aims at cutting down I/O cost rather than the overall cost brought by FPR of Bloom filter.
%

\textbf{Learning-based.}
Kraska \etal first proposed Learned Bloom filter (LBF) by incorporating a machine learning (ML) model to improve space utilization from evident characteristics of data distribution \cite{kraska2018the}.
Mitzenmacher proposed to add an initial Bloom filter before ML model to improve the performance of LBF, which named Sandwiched Learned Bloom filter (SLBF) \cite{mitzenmacher2018a}.
Dai \etal proposed Adaptive Learned Bloom filter (Ada-BF) to score keys by ML model and tune the number of hash functions according to the score \cite{2019adaptive}.
Under incremental workloads, Bhattacharya \etal proposed two variants of LBF for supporting updates \cite{bhattacharya2020adaptive}, one is Classifier-Adaptive LBF (CA-LBF) by retraining ML model, the other is  Index-Adaptive LBF (IA-LBF) by sacrificing memory.
With an elaborately trained learned model, existing learning-based works could achieve remarkable performance in terms of FPR but at the cost of prolonged training time and query latency.
Besides, they are not sensitive to cost distribution.
\vspace{2mm}
\section{Hash Adaptive Bloom filter}
\label{sec:model}
\vspace{1mm}
In this section, we first present the model of customizing hash functions for each key and formulate the optimization problem.
Then we provide the problem observation and our design insight.
Next, we describe the architecture of HABF in detail and the TPJO algorithm is further proposed to optimize the hash function selections.
Finally, the Zero-FNR query procedure is provided, followed by the FPR analysis.

\subsection{Problem Formulation}
Let $U$ denote the universal key set.
Meanwhile, $S$ is a collection of positive keys in $U$ and $O$ is a collection of negative keys in $U$.
Note that $S$ and $O$ are disjoint.
Let $\Theta$ denote the cost distribution of keys, \ie, $\Theta(e)$ is the cost of key $e$.
Given the set of global hash functions $H=\{h_1, h_2,\cdots, h_{|H|}\}$.
Our problem is how to select a hash function subset $\phi(e)$ of size $k$ from $H$ for each key to minimize the overall cost brought by false positives of keys from $O$. To measure the performance across different algorithms, we define the normalized cost from false positives as a weighted FPR, namely,
\begin{align}\label{model:equa1}
   {\small Weighted \; FPR = \frac{\sum \limits_{e \in O}\Theta(e)\cdot \prod \limits_{h \in \phi(e)}\sigma(h(e))}{\sum \limits_{e\in O} \Theta(e)}},
\end{align}
where $\sigma(i)$ is the value of $i^{th}$ bit in Bloom filter.
In particular, when $\Theta$ is uniform, the weighted FPR is equivalent to traditional FPR.
For quick reference, we summarize the notations used throughout this paper in Table \ref{Tab:notation}.

\begin{table}[t]
	\vspace{0mm}
	\caption{Notations}
	\vspace{-2mm}
	\label{Tab:notation}
    \small
	\begin{center}
		\begin{tabular}{|c|p{6.3cm}|}
			\hline
			\text{Notations} &\text{Definitions}\\
			\hline
			$S, O$ & Collection of positive keys, negative keys in $U$\\
			\hline
            $e_s, e_o $ & Key in $S$, $O$\\
			\hline
            $m$ & Number of bits in Bloom filter\\
			\hline	
            $k$ & Number of hash functions used by keys\\
			\hline
            $\Theta(e)$ & Cost of key $e$\\
			\hline
            $\phi(e)$ & $k$-size hash function subset selected from $H$ for key $e$\\
			\hline
            $H$ & Global hash functions, $H=\{h_1, h_2,…, h_{|H|}\}$\\
			\hline
            $H_0$ & Initial hash function selection\\
			\hline
            $\omega$ & Number of cells in HashExpressor\\
			\hline
            $C[i]$ & $i^{th}$ cell in HashExpressor\\
			\hline
            $f$ & Unified hash function of HashExpressor\\
			\hline
            $e_{ck}, e_{opk}$ & collision key, optimized key\\
			\hline
            $V, \Gamma$ & Two runtime-index structures\\
			\hline
            $V[i], \Gamma[i]$ & $i^{th}$ unit in $V$, $i^{th}$ bucket in $\Gamma$\\
			\hline
            $F_{bf}, F^*_{bf}$ & False positive rate of Bloom filter in HABF before and after optimization \\
			\hline
		\end{tabular}
	\end{center}
\vspace{-10mm}
\end{table}

\subsection{Observation and Design Insight}
To optimize Equation~(\ref{model:equa1}), a straight design is to go through all possible hash function subsets for each key, and choose the one with the optimal weighted FPR.
However, such a brute-force method is time-consuming and incurs heavy space overhead, \ie, storing hash functions for each key.
Besides, we may use machine learning (ML) models to approximate and store the optimal hash function subset for each key, while the ML model needs to be elaborately trained and heavy computation overhead for training will inevitably be introduced.
Therefore, these designs are impractical.

Further, we observe that if the hash function subset of each negative key is fixed, the weighted FPR is only determined by the bits equal $1$, which are set by (inserted) positive keys.
Inspired by this, we randomly choose a set of hash functions as the \emph{initial hash functions} from $H$ for each (positive/negative) key and then adjust hash functions for certain positive keys to prevent them from conflicting with negative keys.
Therefore, the majority of (unadjusted) keys stick to the \emph{initial hash functions} while the (adjusted) positive keys switch to new hash functions.
Thus, we only need to store the hash functions of (adjusted) keys, rather than that of the universal keys.
Let $H_0=\{h_1^0, \cdots, h_k^0\}$ denote \emph{initial hash functions}.
\subsection{Architecture}
As shown in Fig.~\ref{fig:HABFstructure}, HABF consists of a standard Bloom filter and a data structure named HashExpressor.
At construction time, HABF customizes the hash functions for each positive key to reduce weighted FPR and stores the customized hash functions into HashExpressor.
At query time, it follows a two-round pattern by using $H_0$ first, and if the query with $H_0$ fails, then using hash function subset retrieved from HashExpressor.

As shown in Fig.~\ref{fig:hashexpressor}(a), HashExpressor is a probabilistic structure composed of $\omega$ cells, each of which is a $2$-tuple: $\left \langle endbit, hashindex\right \rangle$.
The $endbit$ field indicates whether the queried hash function subset comes from an adjusted positive key.
The $hashindex$ field stores the index of a hash function from $H$.
Let $C[i]$ be the $i^{th}$ cell of HashExpressor, $C[i].endbit$ and $C[i].hashindex$ be the $endbit$ and $hashindex$ of $C[i]$, respectively.
Here, $C[i]$ is empty if both $C[i].endbit$ and $C[i].hashindex$ are zero.
Now, we introduce the two basic operations of HashExpressor, \ie, Insertion and Query.

\textbf{1) Insertion.}
For each key $e$ and its hash function subset $\phi(e)$, we firstly initialize all hash functions in $\phi(e)$ to be invalid (not being inserted already).
Then, HashExpressor maps $e$ to the cell $C[f(e)]$ with a predefined hash function $f$, and there are three cases for cell $C[f(e)]$:

\emph{Case 1:} if $C[f(e)]$ is empty, we randomly choose an invalid hash function $h$ from $\phi(e)$ and mark $h$ as valid.

\emph{Case 2:} if $C[f(e)]$ is not empty and $C[f(e)].hashindex$ is an invalid hash function in $\phi(e)$, we mark $h = C[f(e)].hashindex$ in $\phi(e)$ as valid.

\emph{Case 3:} $\phi(e)$ is failed to be inserted.

\begin{figure}[t]
	\vspace{0mm}
	\centering
	\setlength{\abovecaptionskip}{5pt}
	\includegraphics[width=0.82\linewidth]{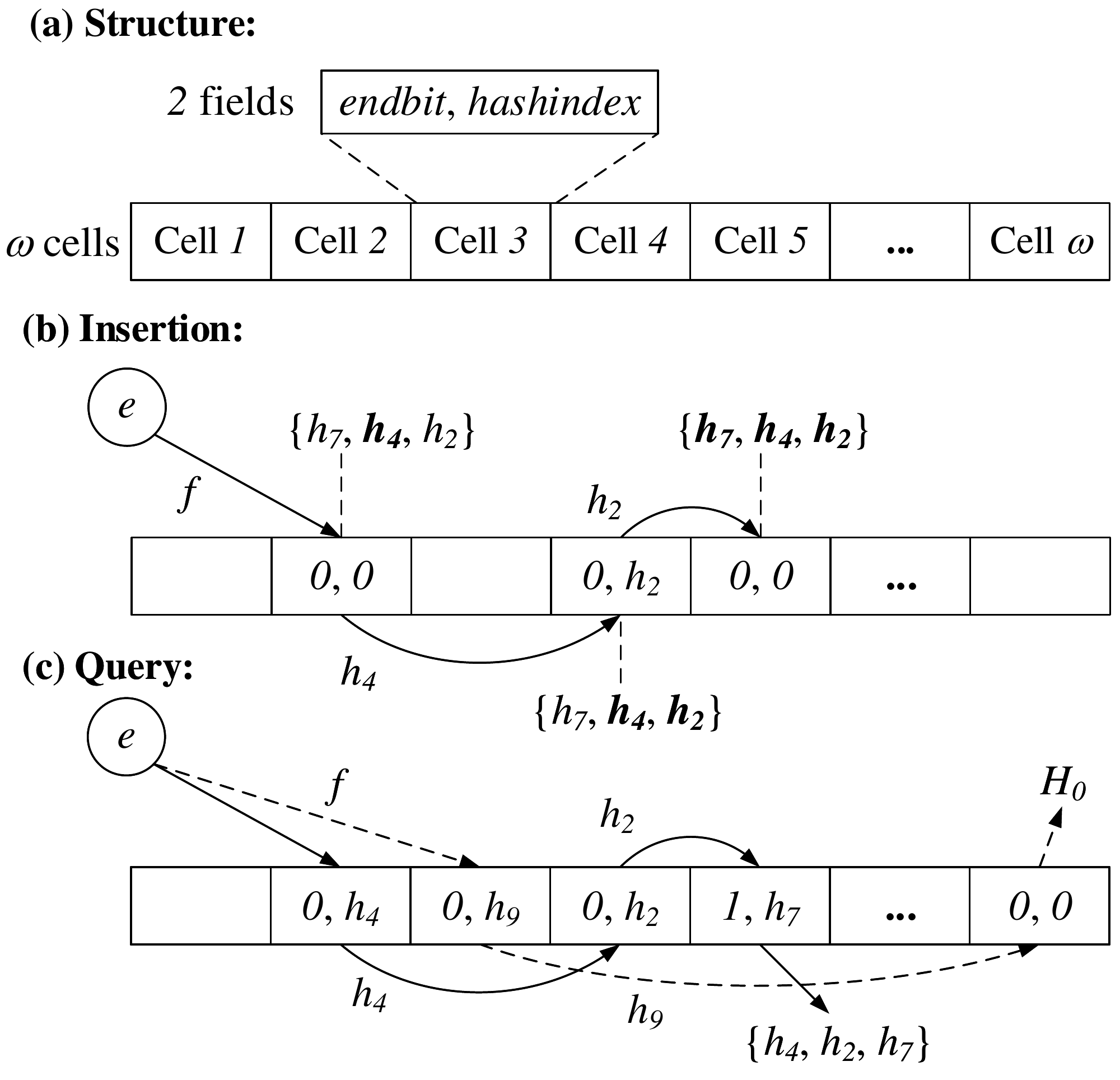}
	\vspace{-0.5mm}
	\caption{HashExpressor structure and operations}
	\label{fig:hashexpressor}
	\vspace{-10.5mm}
\end{figure}

If $C[f(e)]$ falls into Case 1 or 2, we repeat the above mapping procedure but with another hash function $h$, \ie, mapping $e$ to the next Cell $C[h(e)]$.
The above procedure repeats until all hash functions in $\phi(e)$ are marked as valid and the $endbit$ of cell mapped in the last time will be set to $1$. Then, we insert the hash functions in $\phi(e)$ into HashExpressor in the order of marking valid.
For example, as shown in Fig. \ref{fig:hashexpressor}(b), when inserting $\phi(e)=\{h_7,h_4,h_2\}$, $e$ is first mapped to an empty cell $\left \langle 0, 0\right \rangle$ with $f$, we randomly mark $h_4$ as valid.
Next, $e$ is mapped to cell $\left \langle 0, h_2\right \rangle$ with $h_4$, thus we mark $h_2$ as valid.
At last, $e$ is mapped to another empty cell $\left \langle 0, 0\right \rangle$ with $h_2$, we mark $h_7$ as valid and set the $endbit$ to $1$.
Finally, we insert $\phi(e)$ into HashExpressor in the order of $\{h_4,h_2,h_7\}$.

\textbf{2) Query.}
To retrieve the hash function set $\phi(e)$ for a key $e$, HashExpressor maps $e$ to the Cell $c_1$ with the predefined hash function $f$.
If $c_1$ is $empty$, $e$ has not adjusted hash functions and the query procedure fails, $\phi(e)=H_0$.
Otherwise, we store hash function $h_{c_1}$ from $c_1.hashindex$ into $\phi(e)$, and then map $e$ to the next cell $c_2$ with $h_{c_1}$.
The procedure repeats until the size of $\phi(e)$ reaches $k$ and the $endbit$ of the last mapped cell $c_k$ is $1$.
If so, $\phi(e) = \{h_{c_1}, h_{c_2},..., h_{c_k}\}$ or $\phi(e)=H_0$.
For example, as shown in Fig. \ref{fig:hashexpressor}(c), we set $k=3$.
If $e$ is mapped along the solid line, $e$ is first mapped to $\left \langle 0, h_4\right \rangle$ with $f$ and we get $h_4$, then $e$ is mapped to $\left \langle 0, h_2\right \rangle$ with $h_4$ and we get $h_2$.
At last, $e$ is mapped to $\left \langle 1, h_7\right \rangle$ with $h_2$ and we get $h_7$, since the $endbit$ of the last cell is $1$, so $\phi(e) = \{ h_4, h_2, h_7\}$.
However, if $e$ is mapped along the dotted line, $e$ is mapped to an empty cell with $h_9$, so $\phi(e) = H_0$.
Note that HashExpressor may suffer from insertion failure when two different keys are mapped to the same cell but can not share the cell space to store hash functions.
To make HashExpressor more compact, we propose a two-phase joint optimization algorithm to tackle the insertion procedure of HashExpressor together with the optimization procedure of hash function selection simultaneously.

\subsection{Two-phase Joint Optimization}
\begin{figure}[t]
    \vspace{6.5mm}
	\centering
	\setlength{\abovecaptionskip}{5pt}
	\includegraphics[width=0.82\linewidth]{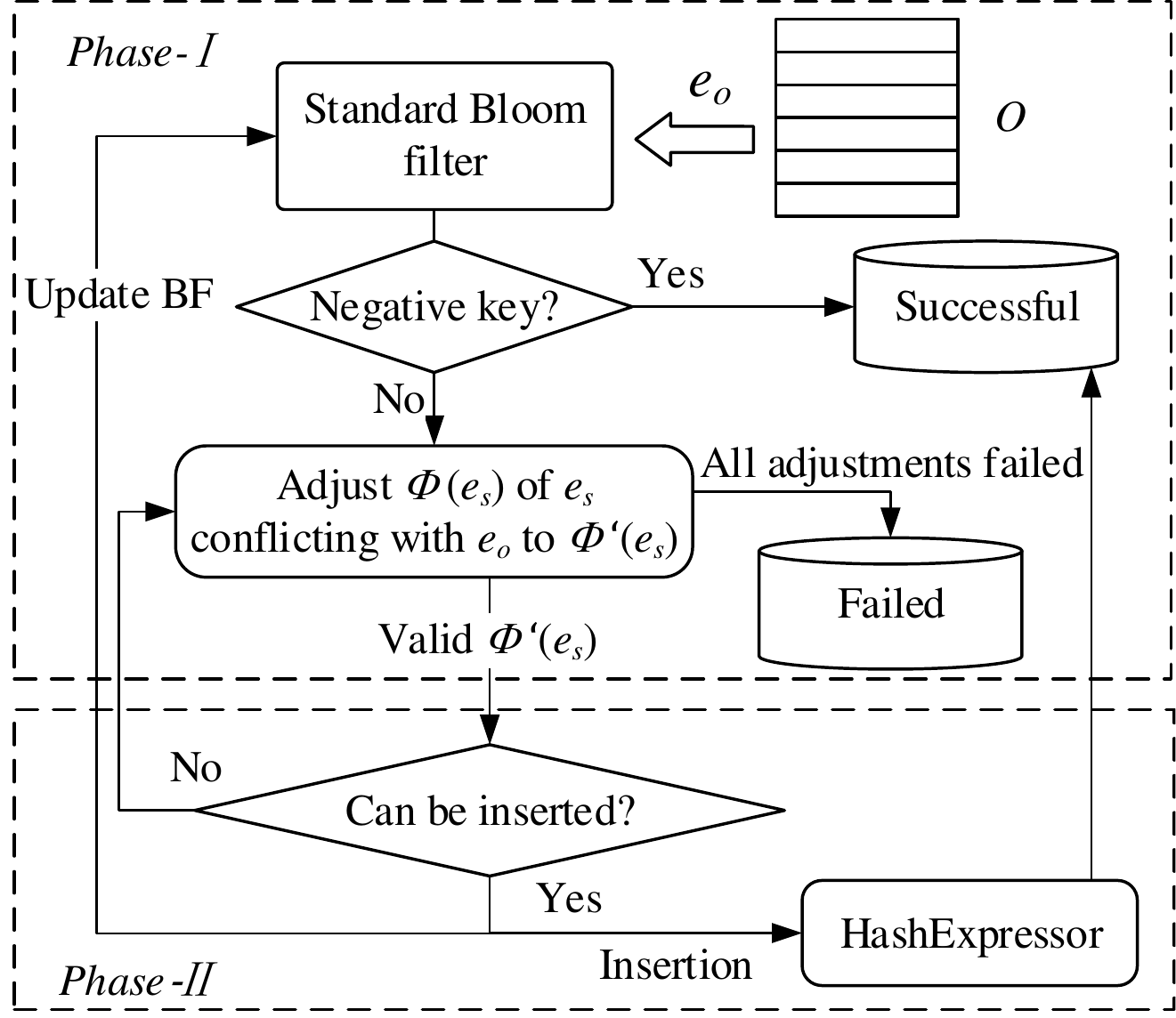}
	\vspace{0.2mm}
	\caption{Procedure of TPJO}
	\label{fig:TPJO}
	\vspace{-10.5mm}
\end{figure}
In this subsection, we introduce the proposed \underbar{T}wo-\underbar{P}hase \underbar{J}oint \underbar{O}ptimization (TPJO) algorithm, including a phase of adjusting hash functions for positive keys ($phase$-$\uppercase\expandafter{\romannumeral1}$) and a phase of inserting the adjusted results into HashExpressor ($phase$-$\uppercase\expandafter{\romannumeral2}$).

We first describe the high-level design of TPJO algorithm.
As shown in Fig. \ref{fig:TPJO}, we initialize standard Bloom filter by inserting all positive keys $e_s$ in $S$ with $H_0$.
In $phase$-$\uppercase\expandafter{\romannumeral1}$, for each key $e_o$ in $O$, we judge whether $e_o$ is tested to be a negative key.
If yes, there is no need to optimize $e_o$.
Otherwise, we adjust $\phi(e_s)$ of $e_s$ to $\phi^{'}(e_s)$, where $e_s$ conflicts with $e_o$.
If then $e_o$ can be tested to be negative, we denote $\phi^{'}(e_s)$ as valid.
In $phase$-$\uppercase\expandafter{\romannumeral2}$, we test whether the valid $\phi^{'}(e_s)$ could be inserted into HashExpressor.
If yes, we insert $\phi^{'}(e_s)$ into HashExpressor and update the Bloom filter.
Otherwise, the insertion in $phase$-$\uppercase\expandafter{\romannumeral2}$ fails, and then we obtain a new $\phi^{'}(e_s)$ in $phase$-$\uppercase\expandafter{\romannumeral1}$.
When $e_o$ is always tested as a positive key whatever $\phi^{'}(e_s)$ is or all valid $\phi^{'}(e_s)$ cannot be inserted into HashExpressor, the optimization of $e_o$ fails.
Besides, the probability of insertion failure for HashExpressor will increase as the number of inserted keys increases.
Therefore, in $phase$-$\uppercase\expandafter{\romannumeral1}$, we first turn to optimize the negative keys with high cost.

We now introduce how to implement TPJO algorithm in detail.
For a key $e_o$ in $O$, according to whether $e_o$ conflicts with keys in $S$, we divide $e_o$ into two types: collision key $e_{ck}$ and optimized key $e_{opk}$.
We first define two runtime auxiliary data structures:
one is the index of bits in Bloom filter that are only mapped by a single positive key in $S$ and only once, we denote it as $V$;
the other is the index of bits that are mapped by optimized key $e_{opk}$, and we denote it as $\Gamma$.

To avoid performance degradation caused by too many adjustment operations, we consider adjusting hash functions of positive keys from $V$.
Let $m$ be the number of bits in Bloom filter. As shown in Fig. \ref{fig:V}, $V$ is composed of $m$ units, and each unit corresponds to one bit in Bloom filter with the same position and is used to store a $2$-tuple: $\left \langle singleflag, keyid\right \rangle$.
Let $V[i]$ be the $i^{th}$ unit in $V$, $V[i].singleflag$ and $V[i].keyid$ represent the $singleflag$ and $keyid$ of $V[i]$, respectively.
$V[i].singleflag$ indicates whether $V[i]$ is mapped by positive keys at most once, $V[i].keyid$ is used to store the identifier (\eg, a pointer in C++) of $e_s$ which is mapped to $V[i]$ first.
We initialize the value of $V[i].singleflag$ to $1$ and $V[i].keyid$ to $NULL$.
\begin{figure}[t]
    \vspace{-0pt}
	\centering
	\setlength{\abovecaptionskip}{5pt}
	\includegraphics[width=0.9\linewidth]{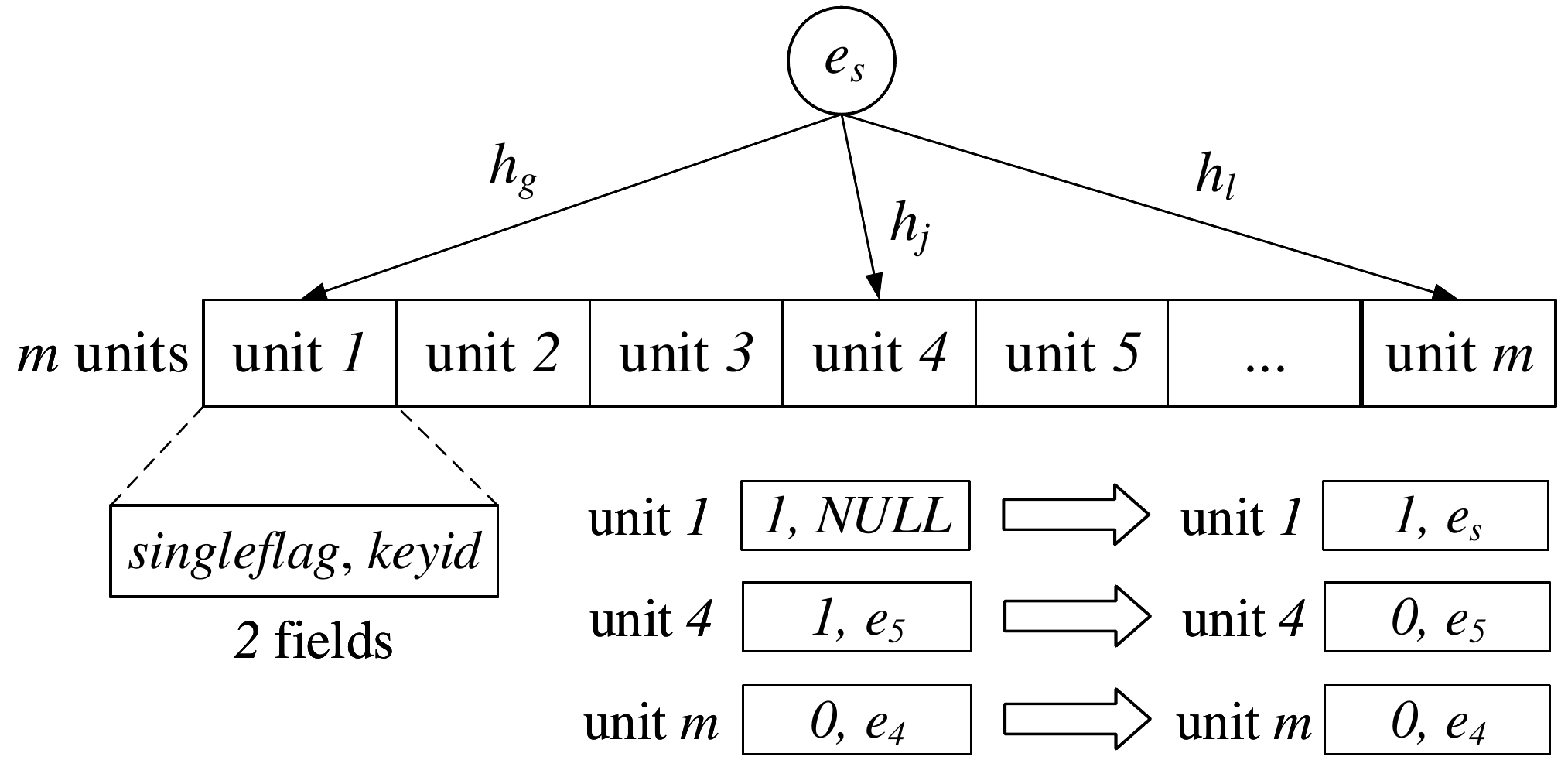}
	\caption{ Data structure of $V$}
	\label{fig:V}
	\vspace{-3mm}
\end{figure}
\begin{figure}[t]
	\centering
	\setlength{\abovecaptionskip}{5pt}
	\includegraphics[width=0.98\linewidth]{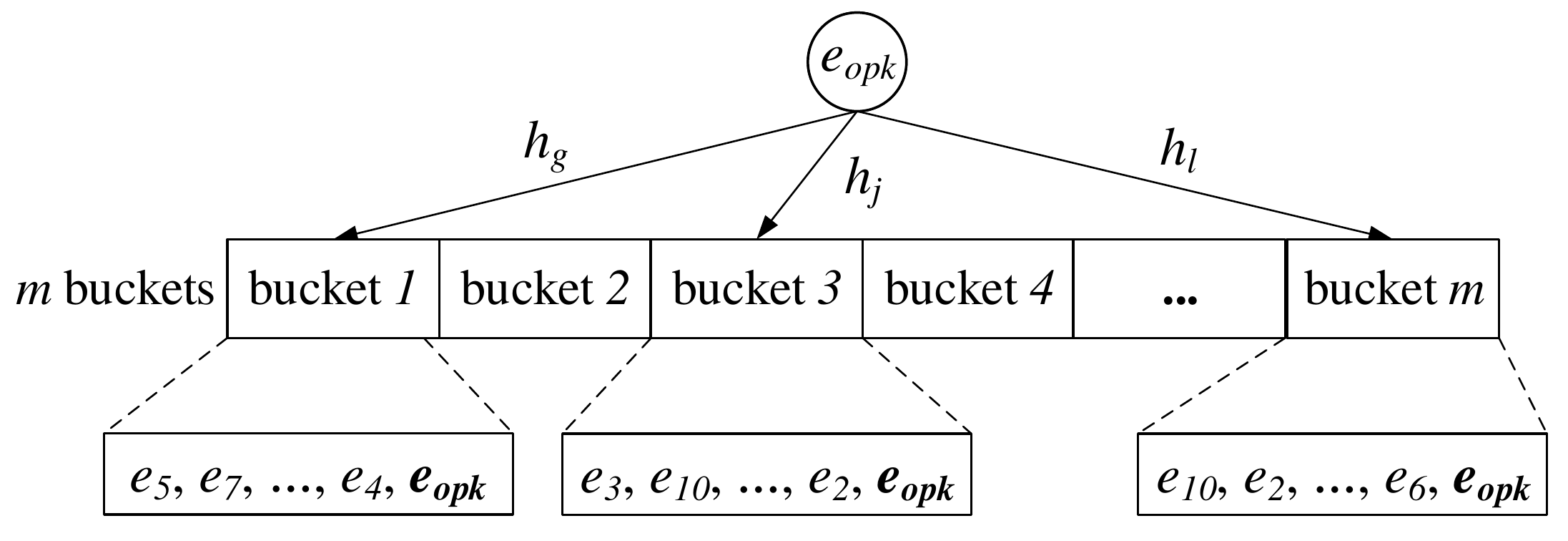}
	\vspace{-1mm}
	\caption{ Data structure of $\Gamma$}
	\label{fig:T}
	\vspace{-14mm}
\end{figure}
To construct $V$, we randomly insert all positive keys in $S$ into $V$.
For a certain key $e_s$ in $S$, $e_s$ needs to be inserted $k = |\phi(e_s)|$ times since it has $k$ hash functions, and when $e_s$ is inserted into a unit $u$, there are three cases:

\emph{Case 1:} If $u.singleflag=1$ and $u.keyid = NULL$, the identifier of $e_s$ is inserted into $u.keyid$, \eg, unit $1$ in Fig. \ref{fig:V}, which changes from $\left \langle 1, NULL\right \rangle$ to $\left \langle 1, e_s\right \rangle$.

\emph{Case 2:} If $u.singleflag=1$ and $u.keyid \neq NULL$, we set $u.singleflag=0$. \eg, unit $4$ in Fig. \ref{fig:V}, which changes from $\left \langle 1, e_5\right \rangle$ to $\left \langle 0, e_5\right \rangle$.

\emph{Case 3:} If $u.singleflag=0$, no changes to $u$, \eg, unit $m$ in Fig. \ref{fig:V}.

\vspace{-1mm}The latter two cases indicate that unit $u$ will be mapped at least twice by the positive keys in $S$.
Then, we use $\Gamma$ to gather all $e_{opk}$s which change to collision keys due to the update of $\phi(e_s)$ in Fig. \ref{fig:TPJO}.
As shown in Fig. \ref{fig:T}, $\Gamma$ is composed of $m$ buckets and $\Gamma[i]$ represents the $i^{th}$ buckets in $\Gamma$.
Each bucket corresponds to the bit in Bloom filter with the same position, and stores identifiers of all $e_{opk}$s mapped to it.
For each bucket $\nu$, we conduct conflict detection for $\nu$ in Algorithm \ref{alg:deconflict}.
If the bit in Bloom filter corresponding to $\nu$ changes from $0$ to $1$, this operation gathers all conflicting $e_{opk}$s in $\nu$ as set $\zeta_\nu$.
%
\begin{algorithm2e}[t]
\small
	\BlankLine
	\caption{Conflict Detection}\label{alg:deconflict}
	\vspace{-1mm}
	\KwIn{Bucket $\nu$}
	\KwOut{Conflict optimized key set \emph{\textbf{$\zeta$}}$_\nu$.}
	\vspace{-1mm}
	\BlankLine
	\For{$e_{opk} \in \nu$}{
        $count$ = 0\\
		\For{$h \in \phi(e_{opk})$}{
			\textbf{if} $\Gamma[h(e_{opk})] \neq \nu$ and $V[h(e_{opk})].keyid \neq NULL$ \textbf{then}
				$count$++
            \textbf{endif}
		}
        \textbf{if} $count$ is $k-1$ \textbf{then}
			Add $e_{opk}$ to \emph{\textbf{$\zeta$}}$_\nu$
		\textbf{endif}
	}
	\Return \emph{\textbf{$\zeta$}}$_\nu$.
\end{algorithm2e}

%
Next, we describe how to select hash functions for $\phi(e_s)$ in $phase$-$\uppercase\expandafter{\romannumeral1}$ to specifically optimize $e_{ck}$s.
As shown in Fig. \ref{fig:phase1}, Collision Queue (abbreviated as $CQ$ below) represents the queue composed of $e_{ck}$s to be optimized, which are arranged in descending order of cost.
When optimizing a certain collision key $e_{ck}$, $e_{ck}$ is first mapped to $V$ by $H_0$ to obtain units that meet the following conditions:
\begin{align*}
 singleflag = 1 \wedge keyid \neq NULL.
\end{align*}
Let $\xi_{ck}$ denote the set of these units, for any $u \in \xi_{ck}$, it is only mapped once by a single positive key, and we get $e_s$ by $u.keyid$.
Let $h_u$ be the hash function where $e_s$ is mapped to $u$ by $h_u$, and $H_c$ be the candidate hash functions set, namely $H_c = H-\phi(e_s)$, we conduct an adjustment operation: using one hash function in $H_c$ to replace $h_u$ in $\phi(e_s)$.
\begin{figure}[t]
    \vspace{0mm}
	\centering
	\setlength{\abovecaptionskip}{5pt}
	\includegraphics[width=0.9\linewidth]{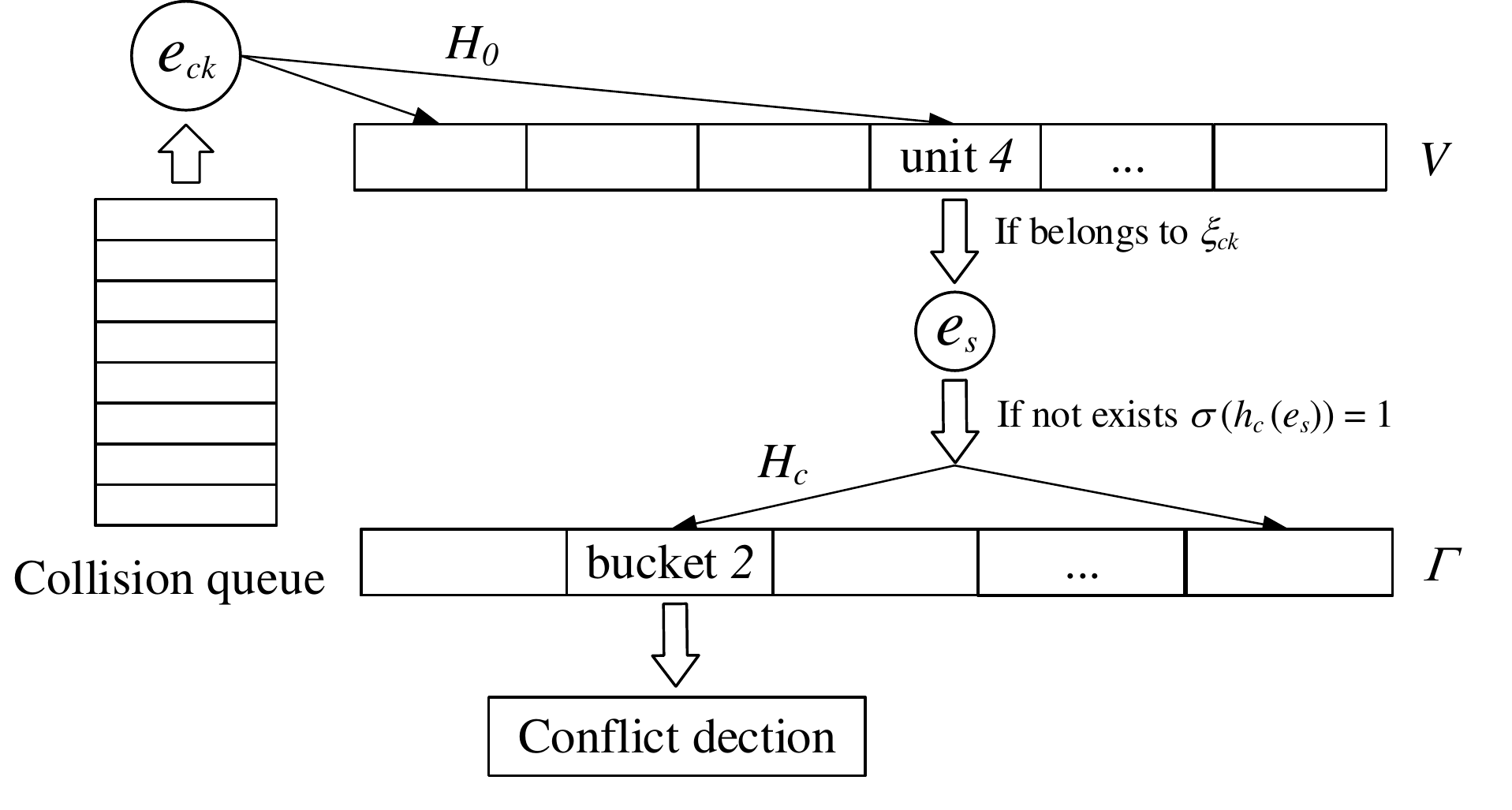}
	\vspace{-2mm}
	\caption{Procedure of $phase$-$\uppercase\expandafter{\romannumeral1}$}
	\label{fig:phase1}
    \vspace{-14mm}
\end{figure}

If there exists a hash function $h_c$ in $H_c$ where $\sigma(h_c(e_s))=1$, $e_{ck}$ can be optimized directly by replacing $h_u$ with $h_c$ without generating new collision keys.
Otherwise, we map $e_s$ to $|H_c|$ buckets of $\Gamma$ and conduct conflict detection for each bucket.
When detecting bucket $\nu$, if $\zeta_\nu \neq \emptyset$, we call $\nu$ \emph{conflict after adjustment}, which means adding $h_c$ to $\phi(e_s)$ will make $e_{opk} \in \zeta_\nu$ become a collision key.
For convenience, we also denote $\Theta(\nu)$ as the overall cost of all conflicting optimized keys in bucket $\nu$.
If there is a bucket that is not \emph{conflict after adjustment}, we can easily use the mapped hash function to replace $h_u$ in $\phi(e_s)$.
Otherwise, we denote $\nu'$ as the bucket with the largest non-negative value of $(\Theta(e_{ck}) - \Theta(\nu'))$.
To minimize the weighted FPR in Equation (\ref{model:equa1}), we choose the hash function mapped to $\nu'$ to replace $h_u$ in $phase$-$\uppercase\expandafter{\romannumeral1}$.
%
%
In particular, if all buckets are \emph{conflict after adjustment}, and $\Theta(e_{ck}) < \Theta(\nu)$ for any bucket $\nu$, there is no need to optimize $e_{ck}$ as it will bring more cost.

For convenience, we define $e_s\in \xi_{ck}$ if $e_s=u.keyid, u\in \xi_{ck}$.
If we can optimize $e_{ck}$ and insert $\phi(e_s)$ into HashExpressor successfully, we insert $e_{ck}$ into $\Gamma$ and update $V$.
Specifically, for updating $V$, we reset unit $u$ and insert $e_s$ into a new unit by the exchanged hash function.
Besides, if the adjustment generates new collision keys in $phase$-$\uppercase\expandafter{\romannumeral1}$, we insert them into the tail of $CQ$.

\textbf{Example:} As shown in Fig.~\ref{fig:example}, we set $k=3$, $H=\{h_1, h_2, h_3, h_4, h_5, h_6\}$, $H_0=\{h_1, h_2, h_3\}$.
When optimizing a collision key $e_1$, $e_1$ is first mapped to three units in $V$, the $singleflag$ of $\left \langle 1, e_7\right \rangle$ is $1$, which means it is only inserted by $e_7$ once.
Therefore, we consider adjusting the hash functions of $e_7$.
Let $\phi(e_7) = H_0$ and $h_2$ be the hash function of $e_7$ to be mapped to $\left \langle 1, e_7\right \rangle$, we use hash functions in $H_c=\{h_4, h_5, h_6\}$ to replace $h_2$ of $\phi(e_7)$.
We assume that only $\sigma(h_4(e_7))=1$, so one selection of $\phi(e_7)$ is $\{h_1, h_3, h_4\}$.
Then we use $h_5$ and $h_6$ to map $e_7$ to two buckets and conduct conflict
detection respectively.
For the first bucket, we assume there is no confliction for already optimized keys after adjustment, which indicates $\{h_1, h_3, h_5\}$ is also a selection for $\phi(e_7)$.
For the second bucket, we assume $e_2$ is conflicted after adjustment and $\Theta(e_2)>\Theta(e_1)$, so $\{h_1, h_3, h_6\}$ is not a selection.
Therefore, to optimize $e_1$, there are two candidate adjustment selections for $e_7$, and if both of them can not be stored, then $e_1$ fails to be optimized.
Otherwise, among the two choices (\ie, $\{h_1, h_3, h_4\}$ and $\{h_1, h_3, h_5\}$), we store the one with maximized overlap (with hash functions already stored in HashExpressor) into HashExpressor.
%
%
\begin{figure}[H]
    \vspace{-1mm}
	\centering
	\setlength{\abovecaptionskip}{5pt}
	\includegraphics[width=0.9\linewidth]{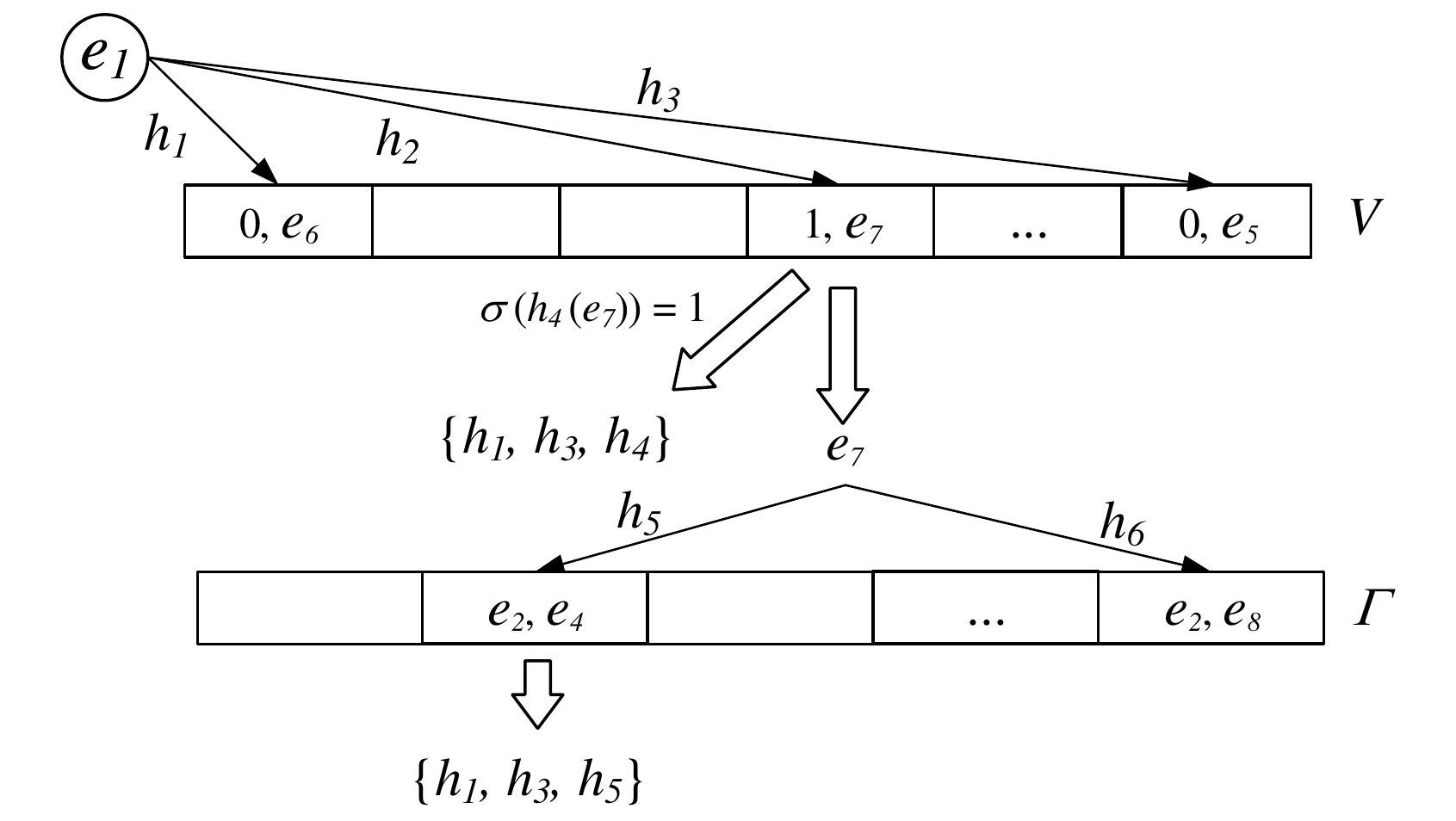}
    \vspace{-1mm}
	\caption{An example for optimizing a collision key}
	\label{fig:example}
    \vspace{-7mm}
\end{figure}
\subsection{Zero-FNR Query}
As mentioned before, HashExpressor is a lightweight hash table, and it has a zero FNR and a small FPR.
Specifically, let $c_k$ represent the last mapped cell in HashExpressor when querying a certain key $e$.
If $e$ has been inserted into HashExpressor, $e$ will definitely get its hash function selection (zero FNR).
Otherwise, $\phi(e) = H_0$.
However, if all cells mapped by $e$ are not empty due to conflicts and $c_k.endbit = 1$ during query, $e$ will be misjudged as an inserted key and the queried $\phi(e) \neq H_0$
which means HashExpressor has a small FPR.
%

%
To make HABF provide the same query pattern (\ie, zero FNR) as Bloom filter, we propose a two-round query mechanism as shown in Fig.~\ref{fig:HABFstructure}.
To be specific, $e$ first uses $H_0$ to check whether it is positive.
If yes, $e$ is identified as a positive key.
If no, we query $\phi(e)$ from HashExpressor and conduct second-round query by using $\phi(e)$ to check again.
If yes, $e$ is also identified to be positive otherwise $e$ is negative.

Next, we analyze how this two-round query pattern achieves zero FNR in HABF.
For a certain key $e_s$ in $S$, there are two possible cases: 1) if $e_s$ has not been inserted into HashExpressor, \ie, $\phi(e_s) = H_0$, $e_s$ will be correctly tested to be positive by the first-round query, and 2) if $e_s$ has been inserted into HashExpressor, namely $\phi(e) \neq H_0$, since HashExpressor has zero FNR, $e_s$ will get $\phi(e)$ correctly, and $e_s$ will be also tested to be positive by the second-round query.
Therefore, HABF achieves Zero-FNR Query.

\subsection{FPR Analysis}
Based on the two-round query pattern, we analyze the FPR of HABF, which is denoted as $F_{habf}$.
Let $F^*_{bf}$ represent the FPR of Bloom filter after optimization and $F_{h}$ represent the FPR of HashExpressor.
For a certain key $e_o$ in $O$, we discuss how $e_o$ will be correctly tested to be negative.
If $e_o$ is tested to be negative by $H_0$ in the first-round query, HashExpressor will query its $\phi(e_o)$ in the second-round query.
If HashExpressor gives the correct result, namely the queried $\phi(e_o) = H_0$, Bloom filter will test $e_o$ by $H_0$ again, then $e_o$ will be finally tested to be negative.
Otherwise, if HashExpressor gives an incorrect result $\phi'(e_o)$, Bloom filter will test $e_o$ with $\phi'(e_o)$.
If $e_o$ can be tested to be negative by $\phi'(e_o)$, $e_o$ will be finally tested to be negative, too.
So $F_{habf}$ can be expressed as
\begin{align} \label{model:equa3}
F_{habf} &= 1-(1-F^*_{bf})(1-F_h+F_h\cdot (1 - F^*_{bf}))\notag\\
   &= F^*_{bf} + F_h\cdot F^*_{bf} - F_h(F^*_{bf})^2.
\end{align}

%
For $F_h$, given a HashExpressor with $\omega$ cells, if $t$ keys have been inserted into HashExpressor, there are at most $t$ cells of which $endbit$ is set to $1$.
For simplification, we assume that the $endbit$s of these cells are evenly distributed, then when querying a key $e$, for the last mapped cell $c_k$, the probability of $c_k.endbit$ being 1 is less than or equal to $\frac{t}{\omega}$.
Therefore, $F_h \leq \frac{t}{\omega}$, and we can derive that $F_{habf} \leq \frac{(\omega+t)}{\omega}\cdot F^*_{bf}$.
In fact, in the actual optimization process of HABF, if we set $t$ to be much smaller than $\omega$, then we have $F_{habf} \approx F^*_{bf}$.
%
%
\subsection{Discussion: Fast Construction and Query.}%
Considering that there is much hash function computation during the optimization of HABF, heavy computation overhead will inevitably be introduced if a quiet number of hash functions are used.
Inspired by~\cite{kirsch2006less}, we employ double hashing for some scenarios.
To be specific, we reduce hash function calculation by simulating a new hash value from two previously calculated hash values $h_1(x)$ and $h_2(x)$, \eg, simulated hash values $g_i(x) = h_1(x) + ih_2(x)$ for $i=0,...,k-1$.
Note that the double-hashing technique may lead to performance degradation~\cite{dillinger2010adaptive}.
However, targeting at higher (query/construction) throughput, we provide a fast version HABF with double hashing and denote it as \textit{f-HABF}.
Further, for faster construction in some scenarios, \textit{f-HABF} speeds up the procedure at the expense of sacrificing partial hash function selections by disabling $\Gamma$ which contains complex operations for accuracy.
\section{Theoretical Analysis}
\label{sec:analysis}
In this section, we theoretically analyze the performance of HABF.
We give the theoretical bound for the expected number of collision keys that HABF can optimize in $CQ$.
%
%
Then, we derive the formula of the theoretical bound of $F^*_{bf}$.
\subsection{Analysis for Probability $P_{\xi}$}
In $phase$-$\uppercase\expandafter{\romannumeral1}$, to avoid performance degradation caused by too many adjustment operations, we only adjust the hash functions of positive keys from units in $V$ that are inserted only once.
For a certain Collision Key $e_{ck}$, these positive keys constitute the set $\xi_{ck}$.
For any unit $u$ in $V$, we first analyze the probability $P_{\xi}$ that $u \in \xi_{ck}$.

\begin{theorem}
\label{theorem:1}
If $e_{ck}$ is a collision key, $b$ is the number of bits allocated for each key, when $e_{ck}$ is mapped to a unit $u$ in $V$, for the probability $P_{\xi}$ that $u$ belongs to $\xi_{ck}$, we have
\vspace*{-1mm}
\begin{align}
  E(P_{\xi})> \frac{\frac{k}{b}}{e^{\frac{k}{b}}-1}.
\end{align}
\end{theorem}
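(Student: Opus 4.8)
The plan is to read $P_{\xi}$ as a conditional occupancy probability in a balls‑into‑bins model and then show that the exact (finite) expression strictly dominates its Poisson limit, which is precisely $\frac{k/b}{e^{k/b}-1}$.

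First I would set up the model. Inserting every positive key in $S$ with $H_0$ throws $nk$ ``balls'' (one per key--hash pair) into the $m$ bits, where $n=|S|$ and, by definition of bits-per-key, $m=nb$. Treating each hash as an independent uniform choice, the number $X$ of balls landing on a fixed bit $u$ is $\mathrm{Bin}(nk,1/m)$ with mean $nk/m=k/b=:\lambda$. A unit $u$ lies in $\xi_{ck}$ (has $singleflag=1$ and $keyid\neq NULL$) exactly when $X=1$: the first insertion sets $keyid$ and leaves $singleflag=1$, whereas any second insertion, even from the same key, triggers Case~2 and clears $singleflag$. Since $e_{ck}$ is a collision key, every bit it maps to is already $1$, so the event to analyze is $\{X=1\}$ conditioned on $\{X\ge 1\}$. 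Hence $E(P_{\xi})=P(X=1\mid X\ge 1)=P(X=1)/P(X\ge 1)$, and with $p=1/m$, $N=nk$ this equals $\frac{Np(1-p)^{N-1}}{1-(1-p)^{N}}$. The claimed bound is the $m\to\infty$ limit of this quantity, because for $Y\sim\mathrm{Poisson}(\lambda)$ we have $P(Y=1\mid Y\ge1)=\frac{\lambda e^{-\lambda}}{1-e^{-\lambda}}=\frac{\lambda}{e^{\lambda}-1}$. So the theorem asserts that the finite Binomial conditional probability strictly exceeds its Poisson limit.

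Rather than compare $(1-p)^{N}$ against $e^{-\lambda}$ directly (the crude bound $(1-p)^{N}\le e^{-\lambda}$ moves both numerator and denominator the ``wrong'' way), I would invert and expand as a ratio of series. Using $\frac{P(X=j)}{P(X=1)}=\frac{1}{j!}\prod_{i=1}^{j-1}\frac{(N-i)p}{1-p}$ (which follows from $\binom{N}{j}/N=\frac{1}{j!}\prod_{i=1}^{j-1}(N-i)$), one gets
\[
\frac{1}{E(P_{\xi})}=\frac{P(X\ge1)}{P(X=1)}=\sum_{j\ge1}\frac{1}{j!}\prod_{i=1}^{j-1}\frac{(N-i)p}{1-p},
\]
which is to be compared with $\frac{e^{\lambda}-1}{\lambda}=\sum_{j\ge1}\frac{\lambda^{j-1}}{j!}$. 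The crux is the per-factor inequality $\frac{(N-i)p}{1-p}<Np=\lambda$, which rearranges to $i>\lambda$. When $\lambda=k/b<1$, the regime of interest and in particular the optimal $k=b\ln 2$, every index $i\ge1$ satisfies $i>\lambda$, so each product for $j\ge2$ is strictly below $\lambda^{j-1}$, giving strict term-by-term domination. Summing yields $\frac{1}{E(P_{\xi})}<\frac{e^{\lambda}-1}{\lambda}$, i.e.\ $E(P_{\xi})>\frac{k/b}{e^{k/b}-1}$.

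The main obstacle is exactly this strict comparison: the asymptotic substitution $(1-1/m)^{nk-1}\approx e^{-k/b}$ does not by itself pin down the direction of the inequality, since conditioning inflates both $P(X=1)$ and $P(X\ge1)$ relative to Poisson. The series/factor argument settles it cleanly, but only while $k/b<1$; in fact the finite correction reverses sign once $k/b$ grows past roughly $1.6$, so the proof tacitly operates with $k$ near the optimal $b\ln 2$, which I would state as the working assumption. A secondary subtlety to dispatch is the independence used for the count at $u$: with a fixed budget of $nk$ balls the bit-occupancies are weakly negatively correlated, so I would either invoke the standard independent-uniform-hashing idealization or argue that conditioning on the remaining $k-1$ bits of $e_{ck}$ being set only strengthens the bound.
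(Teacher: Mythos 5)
Your proposal is correct in the regime you flag, but it takes a genuinely different route from the paper's proof. The paper starts from the same conditional-probability identity (its $P_{\xi}(u)=P_1(u)/(1-P_0(u))$ is your $P(X{=}1\mid X{\ge}1)$), but it deliberately keeps the per-hash hit probabilities $p(u)$ arbitrary, i.e.\ possibly skewed: it lower-bounds $P_{\xi}(u)$ using the product inequality $\prod_{p\in H_0}(1-p(u))\ge 1-\sum_{p\in H_0}p(u)$ (its first lemma), proves that $f(x)=\frac{|S|x}{(1-x)^{-|S|}-1}$ is convex (its second lemma), and then applies Jensen's inequality to reduce the expectation over the possibly non-uniform $x=\sum_{p\in H_0}p(u)$ to $f(E(x))=f(k/m)$; only at the very end does it replace $(1-k/m)^{-|S|}$ by $e^{k/b}$, and that step is written as ``$\approx$'' --- it is an approximation, and in fact $(1-k/m)^{-|S|}>e^{k/b}$, so it does not preserve the strict ``$>$'' claimed by the theorem. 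Your argument trades the paper's generality for rigor: by assuming uniform hashing you make the Jensen/convexity machinery unnecessary (it exists precisely to tolerate skewed hash functions, which is part of the paper's motivation), and your term-by-term series comparison of the $\mathrm{Bin}(nk,1/m)$ conditional probability against its Poisson limit, via the factor inequality $\frac{(N-i)p}{1-p}<Np \iff i>\lambda$, establishes the strict inequality with no approximation at all --- something the paper's own chain never quite does. Your regime caveat is also substantive rather than cosmetic: the theorem carries no hypothesis $k/b\le 1$, yet under uniform hashing the inequality genuinely reverses for large $k/b$ (e.g.\ for $\mathrm{Bin}(100,0.02)$ one gets $P(X{=}1\mid X{\ge}1)\approx 0.3120<0.3130\approx 2/(e^{2}-1)$), so the stated bound implicitly relies on the practical regime $k\le b\ln 2$, which does cover all of the paper's experiments ($k/b\le 1$ throughout). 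In short: same starting identity, but the paper's path is Weierstrass-plus-Jensen with a lossy final approximation, buying robustness to non-uniform hash distributions, while yours is an exact binomial-versus-Poisson domination, buying a genuinely strict bound but only under the uniform-hashing idealization and the (necessary) small-$k/b$ restriction.
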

\begin{proof}
For a certain hash function $h$, we assume that $h$ satisfies a distribution $p$ ($p$ may be non-uniform). For any key $e$, the probability $p(u)$ that $u$ is mapped by $e$ using $h$ is determined by its distribution $p$.
Let $P_1(u)$ represent the probability that $u$ is only inserted once while all positive keys are inserted into $V$ with $k$ times.
%
%
We assume that the distribution of $P_1(u)$ is approximately constant during optimization.
For convenience, we define $p\in H_0$ if $p$ is the distribution of the corresponding hash function in $H_0$.
Moreover, we assume that the hash functions are independent of each other.
Then, $P_1(u)$ can be expressed as
\begin{align}
 P_1(u)= &|S|\cdot( \sum \limits_{p\in H_0}p(u)\prod \limits_{p'\in H_0,p'\neq p}(1-p'(u))) \notag\\ & \cdot (\prod \limits_{p\in H_0}(1-p(u)))^{|S|-1} \notag\\
 > &|S|\cdot \sum \limits_{p\in H_0}p(u)(\prod \limits_{p\in H_0}(1-p(u)))^{|S|}.
\end{align}
%

Let $P_0(u)$ represent the probability that $u$ is empty, then $P_0(u) = 1-(\prod \limits_{p\in H_0}(1-p(u)))^{|S|}$. 
As the units mapped by $e_{ck}$ are inserted at least once, $P_{\xi}(u)$ can be expressed as a conditional probability:
\begin{align}\label{equa:2}
 P_{\xi}(u) = \frac{P_1(u)}{1-P_0(u)} & > \frac{|S|\cdot \sum \limits_{p\in H_0}p(u)}{\frac{1}{(\prod \limits_{p\in H_0}(1-p(u)))^{|S|}}-1}.
\end{align}

\begin{lemma}\label{lemma:1}
$\forall p \in H_0$, $0\leq p(u)\leq 1$, we have
\begin{align}
\label{equal:lemma:1}
\prod \limits_{p\in H_0}(1-p(u)) \geq 1- \sum \limits_{p\in H_0}p(u).
\end{align}
\end{lemma}
%

\begin{lemma}\label{lemma:2}
$\forall x \in [0,1]$, $f(x) = \frac{|S|\cdot x}{\frac{1}{(1-x)^{|S|}}-1}$
is convex.
\end{lemma}

Due to space limitations, the proofs of Lemma \ref{lemma:1} and Lemma \ref{lemma:2} are detailed in the appendix.
Let $x=\sum \limits_{p\in H_0}p(i)$, as per Lemma \ref{lemma:1}, $P_{\xi}(u) > \frac{|S|\cdot x}{\frac{1}{(1-x)^{|S|}}-1} = f(x)$.
%
As per Lemma \ref{lemma:2}, $f(x)$ is convex, by Jensen inequality \cite{rudin2006real}, we get
\begin{align}
E(P_{\xi}) = E(P_{\xi}(u)) > E(f(x)) \geq f(E(x)).
\end{align}

For any hash function distribution $p$, $E(p(u)) = \frac{1}{m}$, and $E(x) = E(\sum \limits_{p\in H_0}p(i)) = \sum \limits_{p\in H_0}E(p(i)) =  \frac{k}{m}$, so we have
\begin{align}
E(P_{\xi}) 
>\frac{|S|\cdot \frac{k}{m}}{(1-\frac{k}{m})^{(-\frac{m}{k})\cdot \frac{k}{m} \cdot |S|}-1} \approx  \frac{\frac{k}{b}}{e^{\frac{k}{b}}-1}.
\end{align}
This completes the proof.
\end{proof}

\subsection{Analysis for $F^*_{bf}$}
Let $F_{bf}$ represent the FPR of Bloom filter before optimization, and since HABF only optimizes the $e_{ck}$s, which means $F^{*}_{bf} \leq F_{bf}$.
Let $t$ be the number of collision keys optimized by HABF. Thus for $F^*_{bf}$, we can derive that
\vspace*{-1mm}
\begin{align}
\label{equl:fbf10}
F^*_{bf} = F_{bf} - \frac{t}{|O|}.
\end{align}
\vspace*{-4mm}

We first analyze $\xi_{ck}$ before $F^*_{bf}$, for $\forall e_{ck} \in CQ$, $e_{ck}$ is first mapped to $k$ units in $V$, as per Theorem \ref{theorem:1}, we have $E(|\xi_{ck}|) = k\cdot E(P_{\xi})$.
%
%
When $k\geq 2$, $E(|\xi_{ck}|) > 2g(1) >1.164$. 
%
We assume that at least one unit belongs to $\xi_{ck}$, namely $|\xi_{ck}| \geq 1$ ($k \geq 2$) and we consider the worst case of $|\xi_{ck}|=1$.

Let $u_{ck}$ be the single unit in $\xi_{ck}$ and $e_{sk}$ be the key in $S$ corresponding to $u_{ck}.keyid$.
We denote $P_c$ as the probability that $e_{sk}$ can adjust its hash function in $phase$-$\uppercase\expandafter{\romannumeral1}$ and $P_s$ as the probability that $\phi(e_{sk})$ can be inserted into HashExpressor, $P_c$ and $P_s$ are independent of each other.
For the probability $P_{ck}$ that $e_{ck}$ can be optimized, we have
\begin{align}\label{equa:7}
P_{ck} = P_c \cdot P_s.
\end{align}

For each $e_{ck}$ in $CQ$, $phase$-$\uppercase\expandafter{\romannumeral1}$ provides multiple adjustment schemes ($e_s$ and $\phi(e_s)$) to be inserted into cells in HashExpressor until one of them can be inserted.
We assume that the distribution of the inserted cells in HashExpressor will tend to be approximately uniform. 
If $t$ collision keys have been optimized, we have 
\begin{align}\label{equa:8}
P_s(t) > \prod \limits_{i=0}^{k-1}(1-\frac{kt+i}{\omega}) > (1-\frac{kt+k}{\omega})^k.
\end{align}
Let $P'_c$ be the probability that $\phi(e_{sk})$ can be adjusted to a valid $\phi^{'}(e_{sk})$ when all keys in $O$ are inserted into $\Gamma$, not just the optimized keys as mentioned before. It is easy to see that $P_c \geq P'_c$, and $P'_c$ is not related to $t$.
Due to space limitations, the analysis of $P'_c$ is detailed in the appendix.
\begin{theorem}
\label{theorem:2}
If $T$ is the size of $CQ$ and $t$ is the number of Collsion Keys optimized by HABF, we have
\begin{align}
E(t) > \frac{T\cdot P'_c(\omega-k^2)}{\omega+T\cdot P'_c\cdot k^2}.
\end{align}
\end{theorem}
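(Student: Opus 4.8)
The plan is to read the optimization of $CQ$ as a sequential process over its $T$ keys and to derive a recurrence for the expected number of successes. Process the collision keys one at a time, and let $t_i$ denote the number successfully optimized among the first $i$ keys, so that $t_0=0$ and $t=t_T$. When the $(i{+}1)$-th key is processed, exactly $t_i$ keys have already been stored in HashExpressor, so by Equation~(\ref{equa:7}) its success probability is $P_{ck}(t_i)=P_c\cdot P_s(t_i)\ge P'_c\cdot P_s(t_i)$, where I use $P_c\ge P'_c$ and the fact that $P'_c$ does not depend on $t$. The first real step is to linearize the space-availability factor: applying Bernoulli's inequality $(1-x)^k\ge 1-kx$ to the bound in Equation~(\ref{equa:8}) gives $P_s(t)>1-\frac{k^2(t+1)}{\omega}$, and hence $P_{ck}(t_i)>P'_c\bigl(1-\frac{k^2(t_i+1)}{\omega}\bigr)$. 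The purpose of this step is that the lower bound is now \emph{affine} in $t_i$, which is exactly what lets me pass to expectations without losing a factor.

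Next I would take expectations. Writing $a_i=E(t_i)$ and using $E(t_{i+1}\mid t_i)=t_i+P_{ck}(t_i)$ together with linearity of expectation applied to the affine lower bound, I obtain the first-order linear recurrence
\begin{align}
  a_{i+1} > a_i\Bigl(1-\frac{P'_c k^2}{\omega}\Bigr)+P'_c\cdot\frac{\omega-k^2}{\omega},\qquad a_0=0. \notag
\end{align}
This has ratio $r=1-\frac{P'_c k^2}{\omega}$ and fixed point $a^\star=\frac{\omega-k^2}{k^2}$, so subtracting $a^\star$ and applying a one-line induction (valid once $r>0$) yields $a_i>a^\star(1-r^i)$, and therefore $E(t)=a_T>\frac{\omega-k^2}{k^2}\bigl(1-(1-\frac{P'_c k^2}{\omega})^T\bigr)$.

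Finally I would convert this into the stated closed form using the elementary inequality $(1-x)^T\le\frac{1}{1+Tx}$ for $x\in[0,1]$, which follows by noting that $(1-x)^T(1+Tx)$ equals $1$ at $x=0$ and is non-increasing on $[0,1]$. Taking $x=\frac{P'_c k^2}{\omega}$ gives $1-(1-\frac{P'_c k^2}{\omega})^T\ge\frac{T P'_c k^2}{\omega+T P'_c k^2}$, and multiplying by $\frac{\omega-k^2}{k^2}$ cancels the $k^2$ to produce exactly $\frac{T\cdot P'_c(\omega-k^2)}{\omega+T\cdot P'_c\cdot k^2}$, as claimed.

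The main obstacle is the dependence of each key's success probability on the random count $t_i$ of previously optimized keys: fixing the worst case $t_i=T$ throughout would be far too pessimistic, while keeping the true $t_i$ makes the successive success events dependent. The linearization via Bernoulli is what reconciles these, because once the per-step lower bound is affine in $t_i$, replacing $t_i$ by $E(t_i)$ is exact under expectation rather than a heuristic mean-field substitution, and the problem collapses to a deterministic linear recurrence. The only side condition I would need to verify is the mild regime $r>0$, i.e.\ $\omega>P'_c k^2$, which holds whenever HashExpressor is not pathologically small and is what legitimizes the inductive solution of the recurrence.
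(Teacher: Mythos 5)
Your proof is correct (modulo the modeling assumptions the paper itself relies on, namely Equations~(\ref{equa:7})--(\ref{equa:8}) and the fact that $P'_c$ does not depend on the number of already-optimized keys), but it reaches the bound by a genuinely different route. The paper introduces an auxiliary process HABF$'$ that performs an insertion into HashExpressor after \emph{every} collision key regardless of success, so that $E(t)\geq E'(t)$; it then writes $E'(t)$ as a sum of success probabilities of independent indicators (Lemma~\ref{lemma:3}), applies Jensen's inequality to the convex function $g'(i)=(1-\frac{k(i+1)}{\omega})^k$, substitutes $E(i)=E'(t)$, linearizes $g'$ via Lemma~\ref{lemma:1}, and finally solves the self-consistent inequality $E'(t)>T\cdot P'_c\cdot g'(E'(t))$ algebraically. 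You instead run a direct sequential recurrence on $a_i=E(t_i)$: by applying the Bernoulli linearization \emph{before} taking expectations, the per-step lower bound becomes affine in $t_i$, so passing to expectations is exact by linearity --- no coupling construction, no independence lemma, and no Jensen step are needed; you then solve the linear recurrence exactly and recover the stated closed form via $(1-x)^T\leq\frac{1}{1+Tx}$. Your route buys two things: it sidesteps the paper's most delicate step (the substitution $E(i)=E'(t)$ inside the Jensen bound, which conflates the processing index with the random count of inserted keys and is the least rigorous point of the paper's argument), and it yields the strictly tighter intermediate bound $E(t)>\frac{\omega-k^2}{k^2}\bigl(1-(1-\frac{P'_ck^2}{\omega})^T\bigr)$ before the final relaxation. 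What the paper's fixed-point route buys is that its algebraic rearrangement needs no sign condition, whereas your induction requires $r=1-\frac{P'_ck^2}{\omega}>0$; this is innocuous, however, since the stated bound is vacuous unless $\omega>k^2$, which already implies $\omega>P'_ck^2$.
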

\begin{proof}
We denote HABF$'$ as the HABF that changes operations as follows: no matter whether $e_{ck}$ is optimized successfully or not, we insert a virtual positive key with $k$ randomly selected hash functions into HashExpressor.
Let $E'(t) $ be the expected number of collision keys that can be optimized by HABF$'$. It can be seen intuitively that $E(t) \geq E'(t)$.

Next, we analyze $E'(t)$.
Let $P^{(i)}$ be the probability that the $i^{th}$ collision key in $CQ$ is optimized by HABF$'$.
As per Equation (\ref{equa:7}), we have
\begin{align}\label{equa:11}
P^{(i+1)} = P_{ck}(i) \geq P'_c\cdot P_s(i) >  P'_c(1-\frac{k(i+1)}{\omega})^k .
\end{align}
It is easy to prove that function $g'(i) = (1-\frac{k(i+1)}{\omega})^k$ is a convex function, and $P'_c$ is not related to $i$ as mentioned before.
By the Jensen inequality, we have
\begin{align}
E(P^{(i+1)}) > & P'_c\cdot E(g'(i)) >  P'_c\cdot g'(E(i)).
\end{align}
For HABF$'$, the number of inserted keys in HashExpressor is equal to the number of optimized collision keys, $E(i) = E'(t)$, then we have
\begin{align}
E(P^{(i+1)}) > P'_c\cdot g'(E'(t)).
\end{align}

\begin{lemma}\label{lemma:3}
For a random variable $X_i$, $0\leq i\leq n$, the value of $X_i$ is 0 or 1, the probability expectation of $X_i = 1$ is $E(p_i)$, $\forall i,j \in \mathbb{N}, 0\leq i,j\leq n, i\neq j$, $X_i$ and $X_j$ are independent of each other, we have
\begin{align}
E(\sum \limits_{i=0}^nX_i) = \sum \limits_{i=0}^nE(p_i).
\end{align}
\end{lemma}

It is easy to prove Lemma \ref{lemma:3} by mathematical induction.
As per Equation (\ref{equa:8}) and Equation (\ref{equa:11}), $P^{(i+1)}$ is only determined by $i$, so $\forall 0\leq \alpha,\beta \leq n, \alpha\neq \beta$, $P^{(\alpha)}$ and $P^{(\beta)}$ are independent of each other. By Lemma \ref{lemma:3}, we get
\begin{align}
E'(t) = & \sum \limits_{i=0}^TE(P^{(i)}) > T\cdot P'_c \cdot g'(E'(t)).
\end{align}
As per Lemma \ref{lemma:1}, $g'(E'(t)) = (1-\frac{k(E'(t)+1)}{\omega})^k \geq 1-\frac{k^2(E'(t)+1)}{\omega}$, we have $E'(t) > T\cdot P'_c(1-\frac{k^2(E'(t)+1)}{\omega})$, then
\begin{align}
E(t) \geq E'(t) > \frac{T\cdot P'_c(\omega-k^2)}{\omega+T\cdot P'_c\cdot k^2}.
\end{align}
This completes the proof.
\end{proof}

Based on Theorem \ref{theorem:2} and Equation (\ref{equl:fbf10}), we get
\begin{align}
\label{eq:bound}
E(F^*_{bf}) &=  E(F_{bf}) - \frac{E(t)}{|O|} \notag\\
&<  E(F_{bf}) - \frac{T\cdot P'_c(\omega-k^2)}{|O|(\omega+T\cdot P'_c\cdot k^2)}
\end{align}
\subsection{Experimental Verification}
To validate the upper bound of the expected false positive rate of HABF in Equation~(\ref{eq:bound}), we conduct experiments to verify the theoretical bound of $F^*_{bf}$.
As shown in Fig.~\ref{fig:rvatb(a)}, we set bits-per-key $b = 10$ and vary the number $k$ of hash functions from $2$ to $10$.
In Fig. \ref{fig:rvatb(b)}, we set $k = 4$ and vary $b$ from $4$ to $13$. The results show that the theoretical upper bound perfectly holds as it is always larger than the real value.

\begin{figure}
\vspace{0mm}
\centering
\subfigure[Varying number of hash functions]{
\begin{minipage}[t]{0.49\linewidth}
\centering
\label{fig:rvatb(a)}
\includegraphics[width=1\linewidth]{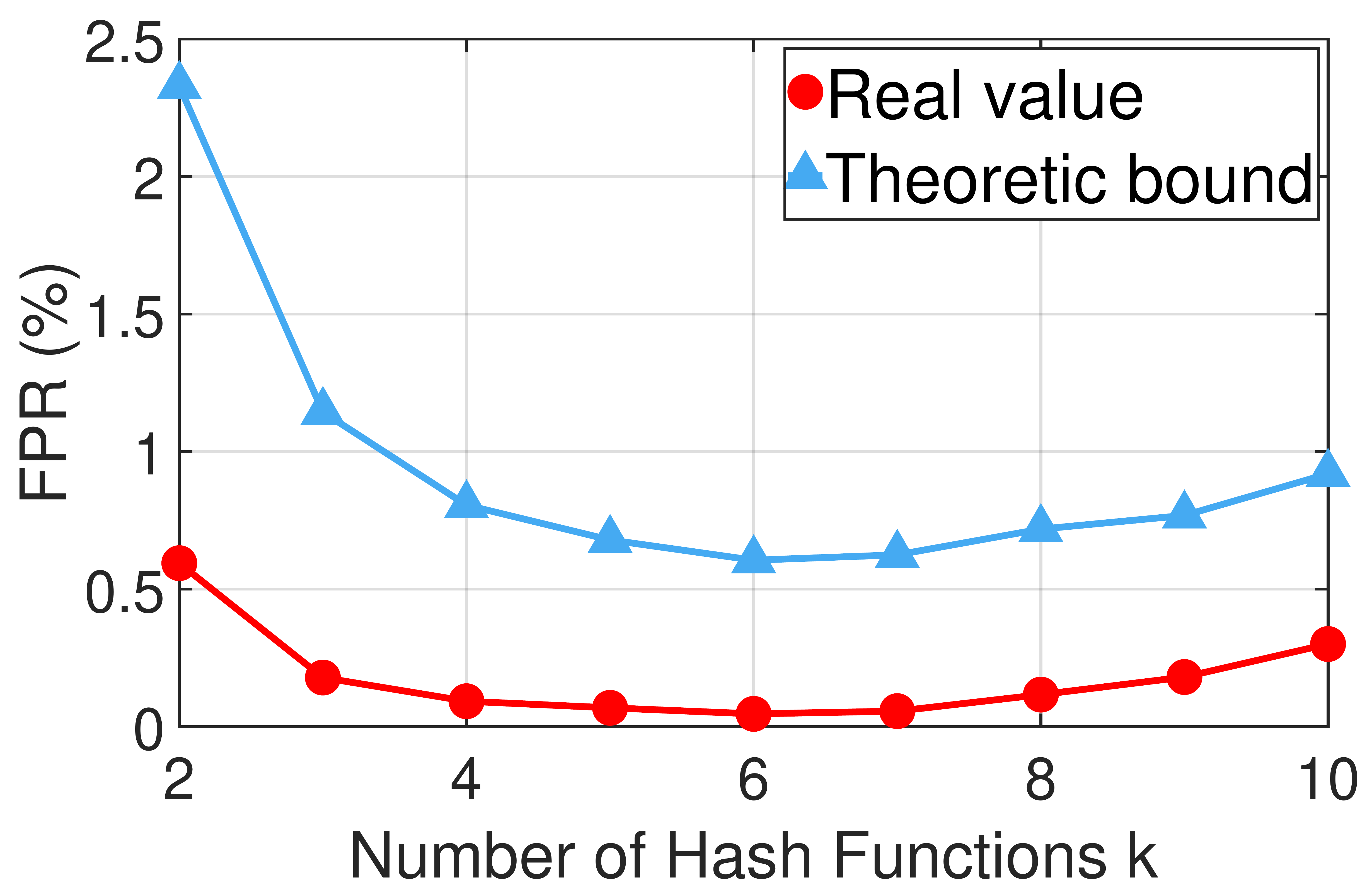}
\end{minipage}%
}%
\subfigure[Varying bits-per-key]{
\begin{minipage}[t]{0.49\linewidth}
\centering
\label{fig:rvatb(b)}
\includegraphics[width=1\linewidth]{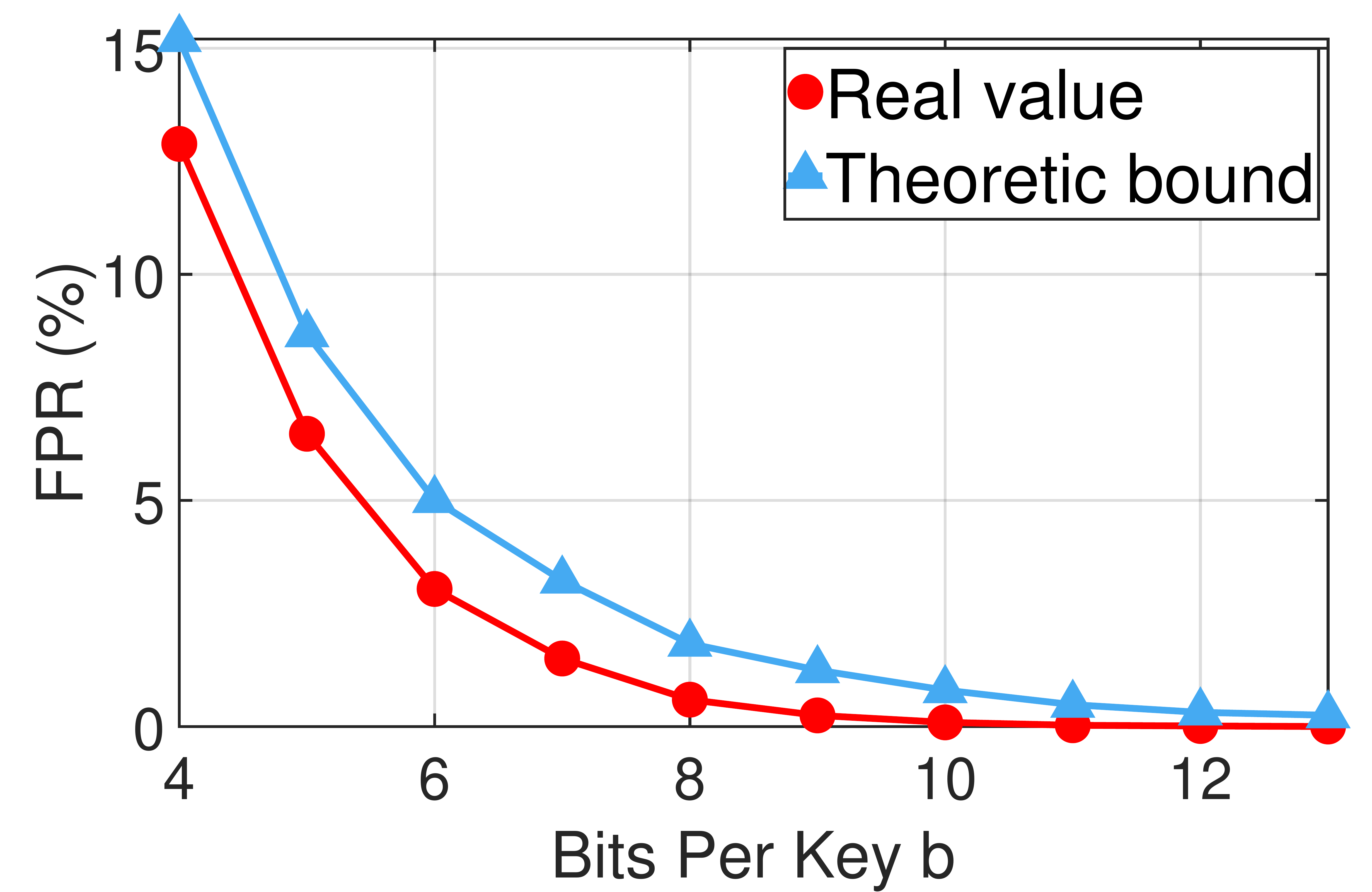}
\end{minipage}%
}%
\centering
\label{fig:rvatb}
\vspace{-4mm}
\caption{Real value and theoretic bound}
\vspace{-15mm}
\end{figure}

\section{Experimental Results}
\label{sec:eval}
In this section, we conduct experiments to validate the performance of HABF.
\subsection{Experimental Setup}
The comparison algorithms can be divided into two types:

\textbf{1) Non-learned filter.} We choose a standard Bloom filter (BF) and Xor filter (Xor) \cite{graf2020xor} as baselines.
Given bits-per-key $b$, we set the number of hash functions $k=ln2\cdot b$ to minimize FPR for BF, and set the number of bits of the fingerprint to $\lfloor\frac{b}{1.23+\frac{32}{|S|}}\rfloor$ for Xor.
The optimized implementation comes from \cite{FastFilter}.
%
%
Besides, under the skewed cost distribution, we also compare HABF with Weighted Bloom filter (WBF).
Considering WBF relies on cost information during the query, thus we cache some keys with high costs in memory for WBF.
%

\textbf{2) Learned filter.} Learned filter refers to the set of the state-of-the-art works based on learned index\cite{kraska2018the}, including Learned Bloom filter (LBF)\cite{kraska2018the}, Sandwiched Learned Bloom filter (SLBF)\cite{mitzenmacher2018a}, and Adaptive Learned Bloom filter (AdaBF)\cite{2019adaptive}, which incorporate machine learning (ML) models as the underlying data structures.
For the deep-learning model, we implemented a 16-dimensional character-level RNN (GRU  \cite{cho2014learning}, in particular) and a six-layer fully connected neural network~\cite{hopfield1982neural}, both of which have a $32$-dimensional embedding layer.
Considering that the current learning models generally use GPU to train the model, we also compare the learning model algorithms in the GPU environment, which we denote as LBF (GPU), SLBF (GPU), and AdaBF (GPU).

\textbf{Implementation: } We implement our algorithm and non-learned filter algorithms in C++ and compiled using g++ with -O3 optimization, and learned filter algorithms are implemented in Keras \cite{keras}, which is a deep learning platform.
We summarize all used hash functions and their implementations in Table \ref{Tab:hashfunc}.
If not specified, we set the default hash function used by f-HABF and other algorithms to XXH$128$.
All the programs run on a server with Intel(R) Xeon(R) Gold $6248$ CPU with $10$ cores running at $2.5$GHZ, $106$GB memory, and two Tesla V100 SXM2 GPUs with $32$GB memory.
The source codes of all algorithms are available in \cite{ourSourceCode}.
\begin{table}[t]
\centering
\caption{Hash function set}
\setlength{\tabcolsep}{2mm}
\label{Tab:hashfunc}
\small
\begin{tabular}{|l|c|}
\hline
Hash function & Implementation \\
\hline
{xxHash} & \multirow{1}*{\cite{xxHash}}\\
\hline
CityHash & \multirow{1}*{\cite{CityHash}}\\
\hline
MurmurHash  & \multirow{1}*{\cite{MurmurHash}}\\
\hline
SuperFast, crc32, FNV  & \multirow{1}*{\cite{SMhasher}} \\
\hline
BOB, OAAT  & \multirow{1}*{\cite{Jenkins}} \\
\hline
DEK, Hsieh, PYHash, BRP, TWMX, & \multirow{3}*{\cite{timmurphy.org}}\\
APHash, NDJB, DJB, BKDR, PJW, & \\
JSHash, RSHash, SDBM, ELF &\\
\hline
\end{tabular}
\label{tbl:table-example}
\vspace{-3.5mm}
\end{table}
\subsection{Metrics}
We use the following metrics: (1) weighted FPR; (2) construction time; (3) query latency; and (4) construction memory consumption.
Weighted FPR is defined in Equation (\ref{model:equa1}), \ie, suppose the false positive key set from $O$ is $O'$, then
\begin{align}
\label{euqal:wFPR}
  Weighted \; FPR = \frac{\sum_{e'\in O'} \Theta(e')}{\sum_{e\in O} \Theta(e)},
\end{align}
where $\Theta(e)$ is the cost of $e$.
In particular, if $\Theta$ is a uniform distribution function, weighted FPR is equivalent to traditional FPR.
%
%
Moreover, the construction time refers to the time to build filters, the query latency refers to the time to conduct membership testing per key, and the construction memory consumption refers to the memory footprint during construction.
To achieve a head-to-head comparison, we set the same bits-per-key for every filter and thus all filters use the same space.
\subsection{Datasets}
\label{eval:datasets}
We use the following two datasets in the experiments:

\textbf{1) Shalla's Blacklists.} Shalla's Blacklists \cite{shalla} is a URL dataset with evident characteristics and available in \cite{DeepBloom}.
The dataset consists of $2.927$ million keys, including $1$,$491$,$178$ positive keys and $1$,$435$,$527$ negative keys.
For simplicity, we call this dataset Shalla for short if no confusion arises.

\textbf{2) YCSB.} YCSB is a benchmark \cite{cooper2010benchmarking} for key-value databases, and we modified YCSB's uniform generator to generate $24$,$074$,$812$ keys, including $12$,$500$,$611$ positive keys and $11$,$574$,$201$ negative keys.
The key schema consists of a $4$-byte prefix and a $64$-bit integer without evident characteristics.

For cost distribution, since all keys in both datasets initially have no cost, we generate Zipf \cite{powers1998applications} distributions with various skewness factors (from $0$ to $3.0$).
In particular, if the skewness factor is $0$, the cost distribution is uniform.
Moreover, for each skewness factor, we randomly shuffle the generated Zipf distribution $10$ times and apply it to each dataset, and then calculate the average weighted FPR.
\subsection{Parameter Performance Evaluation}
We first evaluate the overall performance of HABF.
Let $\Delta_1$ and $\Delta_2$ be the space size of HashExpressor and Bloom filter, and we define the space allocation ratio as $\Delta = \frac{\Delta_1}{\Delta_2}$.
Given the total space size, the performance of HABF is determined by the following three parameters:
(1) space allocation ratio $\Delta$;
(2) number of hash functions $k$;
and (3) cell size of HashExpressor.
Here, we first use Shalla with uniform cost distribution to show how the three parameters affect the performance of HABF.

1) \emph{Effect of $\Delta$.}
We set the space size $\Delta_1 + \Delta_2 = 2$MB and vary $\Delta$ from $0$ to $1$.
The results in Fig. \ref{fig:fig:10(a)} show that when $\Delta$ is low, the failure probability of $\phi '(e)$ to be inserted into HashExpressor increases and the weighted FPR is high.
In particular, if $\Delta=0$, HABF is equivalent to the standard Bloom filter.
When $\Delta$ is high, there will be lots of Collision Keys in Bloom filter, the probability of HABF optimization failure increases and the weighted FPR is high.
The optimal value $\Delta$ is $0.25$, which means that the space allocation ratio between HashExpressor and Bloom filter is $1$:$4$ in this case and will be used as the default parameter throughout the experiments.

2) \emph{Effect of $k$.}
We set the space size $\Delta_1 + \Delta_2 = 2$MB and vary $k$ from $2$ to $8$.
As shown in Fig. \ref{fig:fig:10(a)}, HABF achieves the best performance when $k=3, 4, 5$.
If $k < 3$, the weighted FPR increases since the number of hash functions applied to check a given key decreases.
As $k$ becomes large and $k > 5$, the adjusted hash function sets that could be inserted into HashExpressor will decrease a lot.
$k=3$ is a modest choice, and we set it by default in the following experiments.

3) \emph{Effect of cell size.} The size of a cell is determined by the number of bits in $hashindex$.
If cell size equals $\alpha$, each cell can represent at most $2^{\alpha-1}-1$ hash functions, which is equal to the number $|H|$ of global hash functions.
In our work, we provide $22$ kinds of hash functions and the maximum size of a cell is $5$.
In this setting of experiments, we vary the space size from $1.25$MB to $3.25$MB and compare the performance when cell size equals $3$, $4$, and $5$.
Fig. \ref{fig:fig:10(b)} shows that the weighted FPR is minimized when the cell size equals $4$.
We use this setting by default in the following experiments.
\begin{figure}[t]
\centering
\vspace{0mm}
\subfigure[Weighted FPR vs. $\Delta$ and $k$]{
\begin{minipage}[t]{0.49\linewidth}
\centering
\label{fig:fig:10(a)}
\includegraphics[width=1\linewidth]{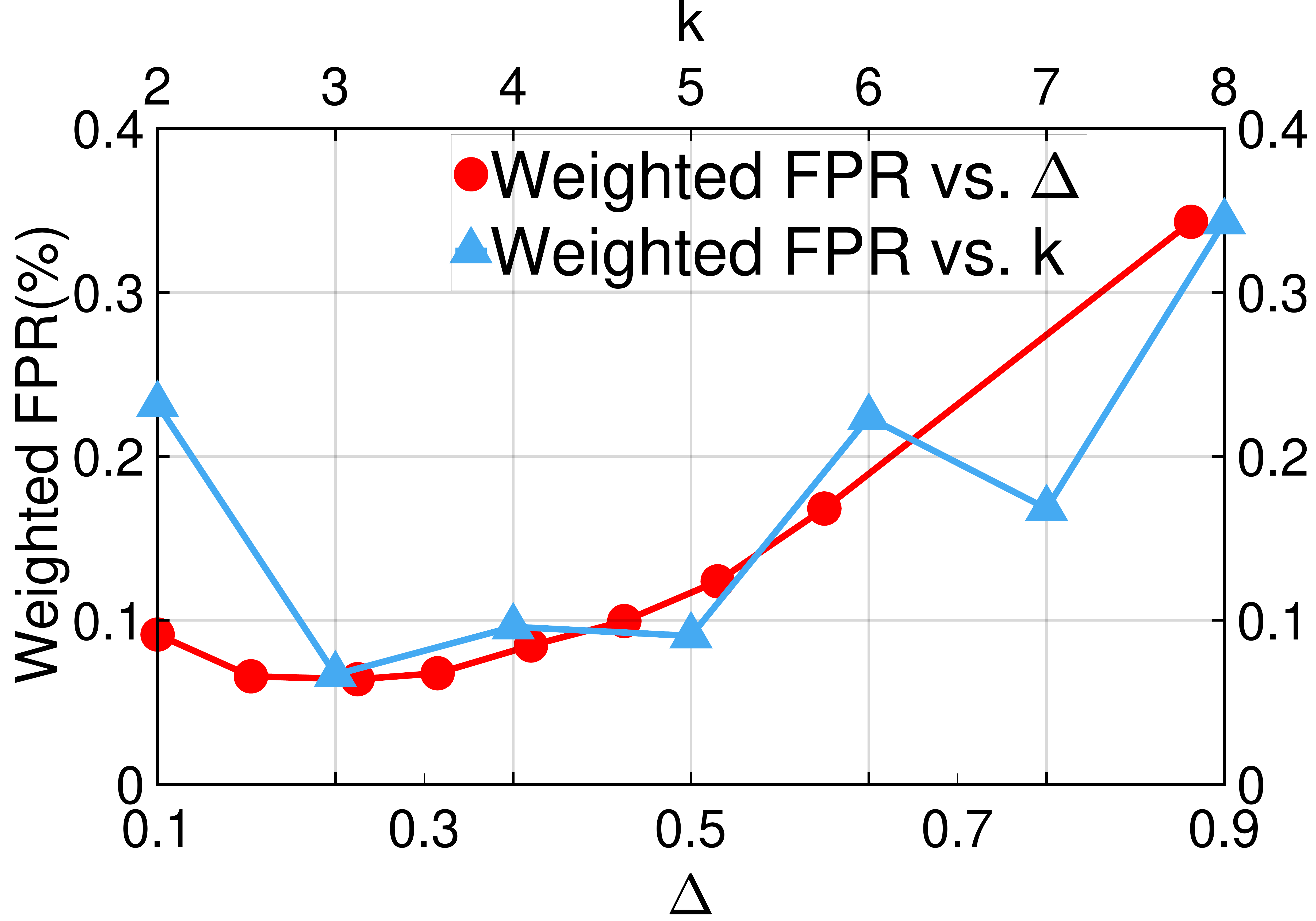}
\vspace{-2mm}
\end{minipage}%
}%
\subfigure[Weighted FPR vs. cell size]{
\begin{minipage}[t]{0.49\linewidth}
\centering
\label{fig:fig:10(b)}
\includegraphics[width=1\linewidth]{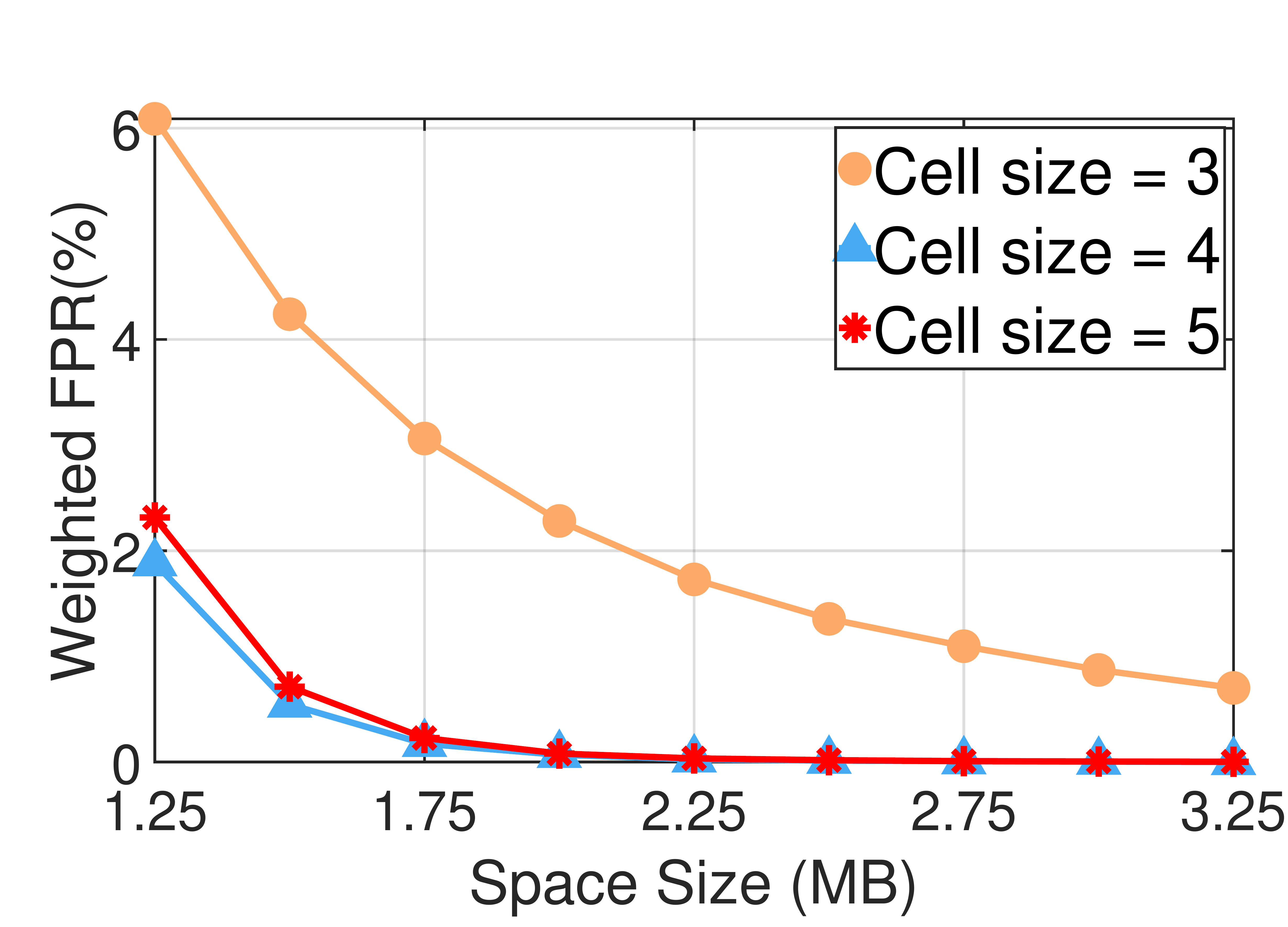}
\vspace{-2.85mm}
\end{minipage}%
}%
\centering
\label{fig:fig10}
\vspace{-4mm}
\caption{Parameter performance evaluation}
\vspace{-23mm}
\end{figure}
\begin{figure*}[t]
\centering
\subfigure[vs. Non-learned filter (Shalla)]{
\begin{minipage}[t]{0.24\linewidth}
\centering
\label{fig:fig:11(a)}
\includegraphics[width=1\linewidth]{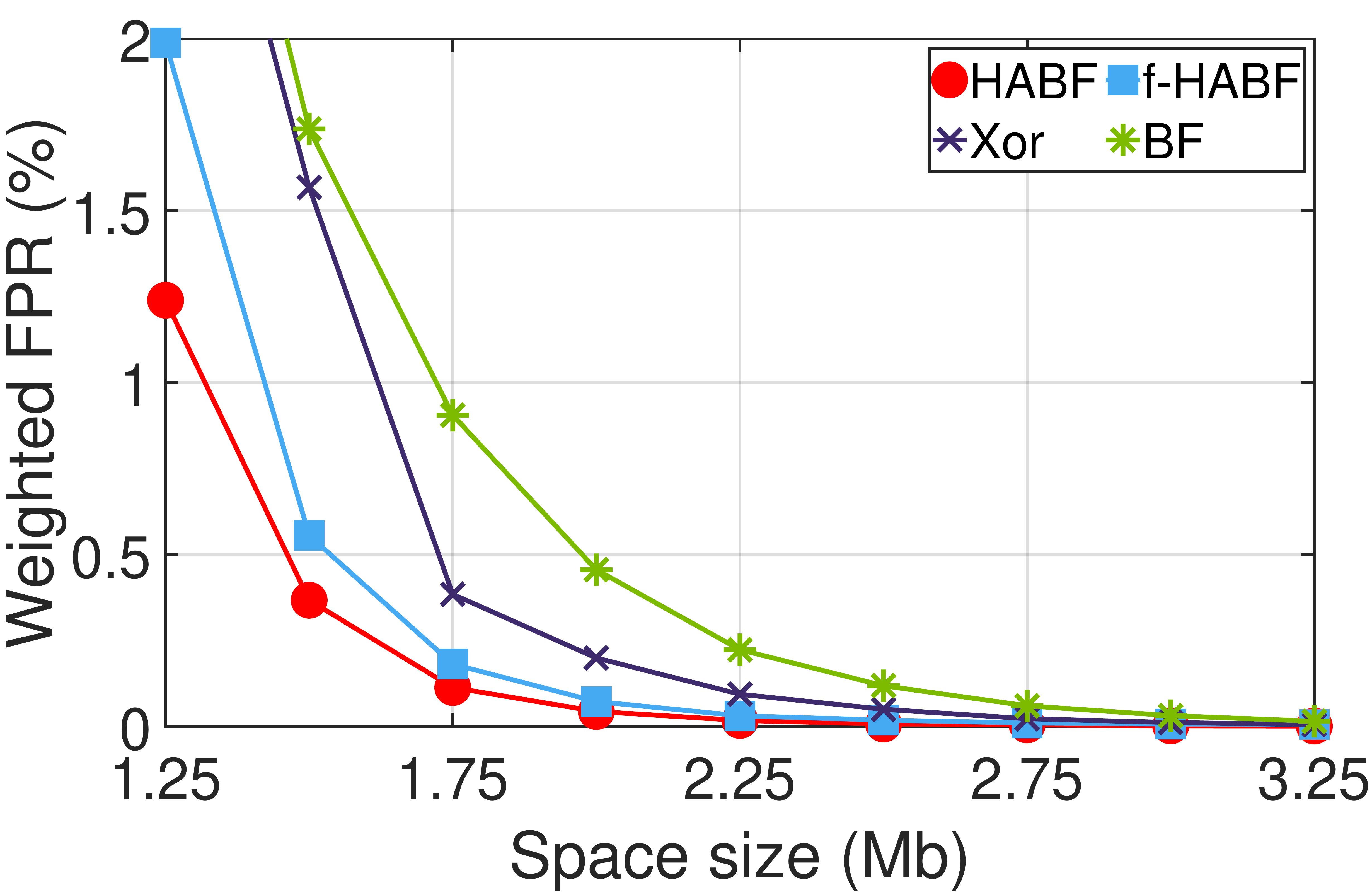}
\end{minipage}%
}%
\subfigure[vs. Learned filter (Shalla)]{
\begin{minipage}[t]{0.24\linewidth}
\centering
\label{fig:fig:11(b)}
\includegraphics[width=1\linewidth]{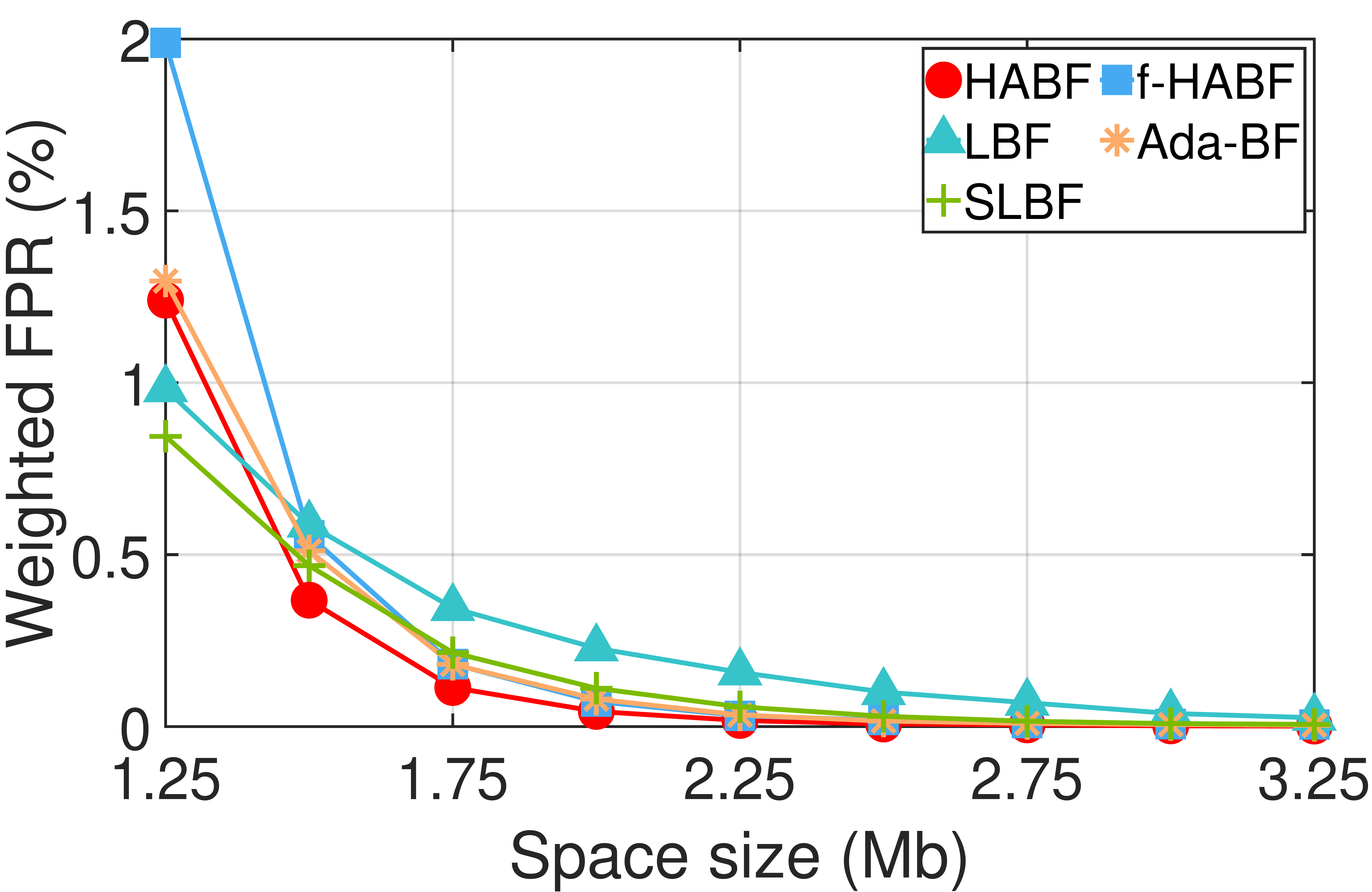}
\end{minipage}%
}%
\subfigure[vs. Non-learned filter (YCSB)]{
\begin{minipage}[t]{0.24\linewidth}
\centering
\label{fig:fig:11(c)}
\includegraphics[width=1\linewidth]{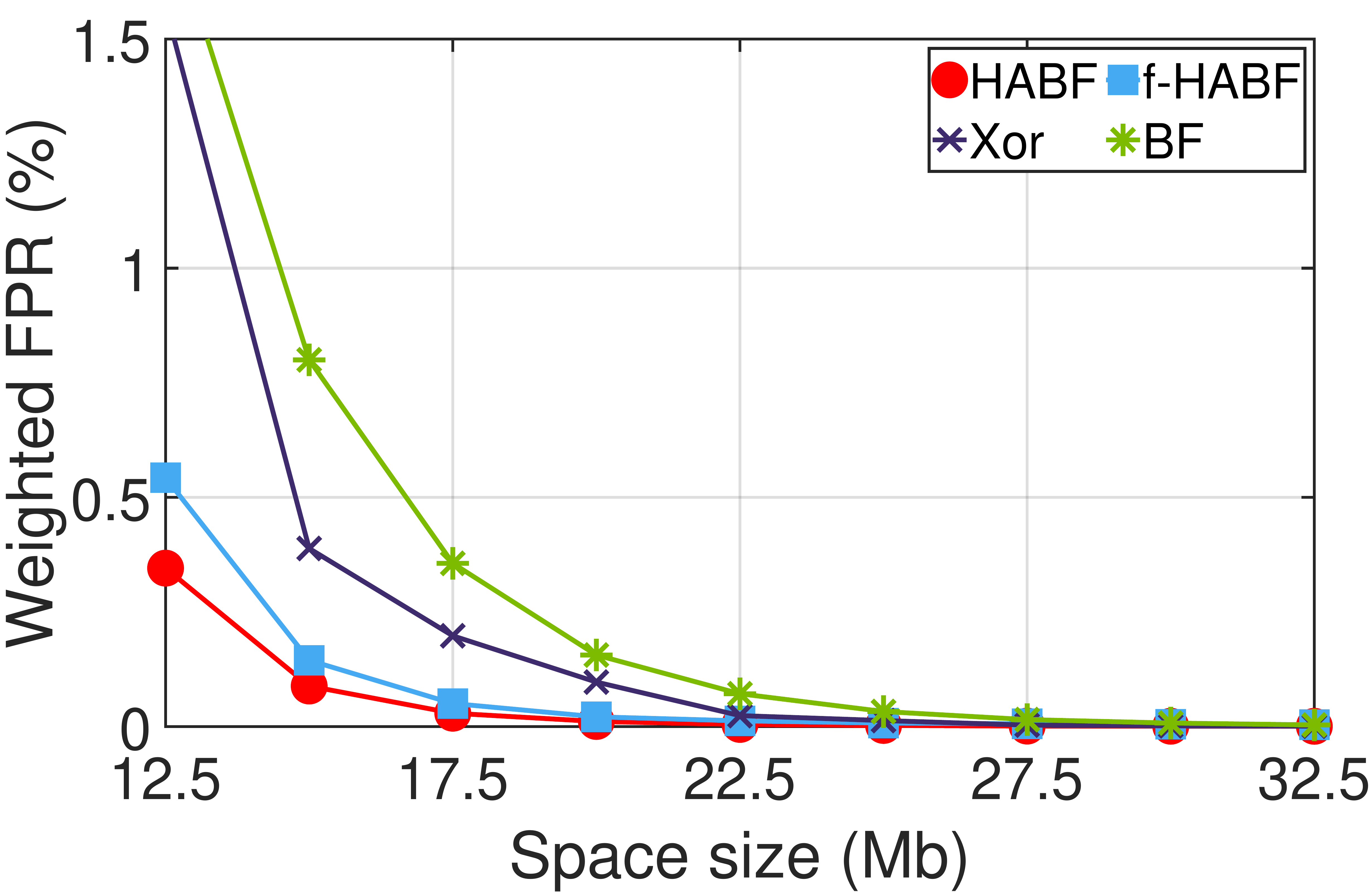}
\end{minipage}
}%
\subfigure[vs. Learned filter (YCSB)]{
\begin{minipage}[t]{0.24\linewidth}
\centering
\label{fig:fig:11(d)}
\includegraphics[width=1\linewidth]{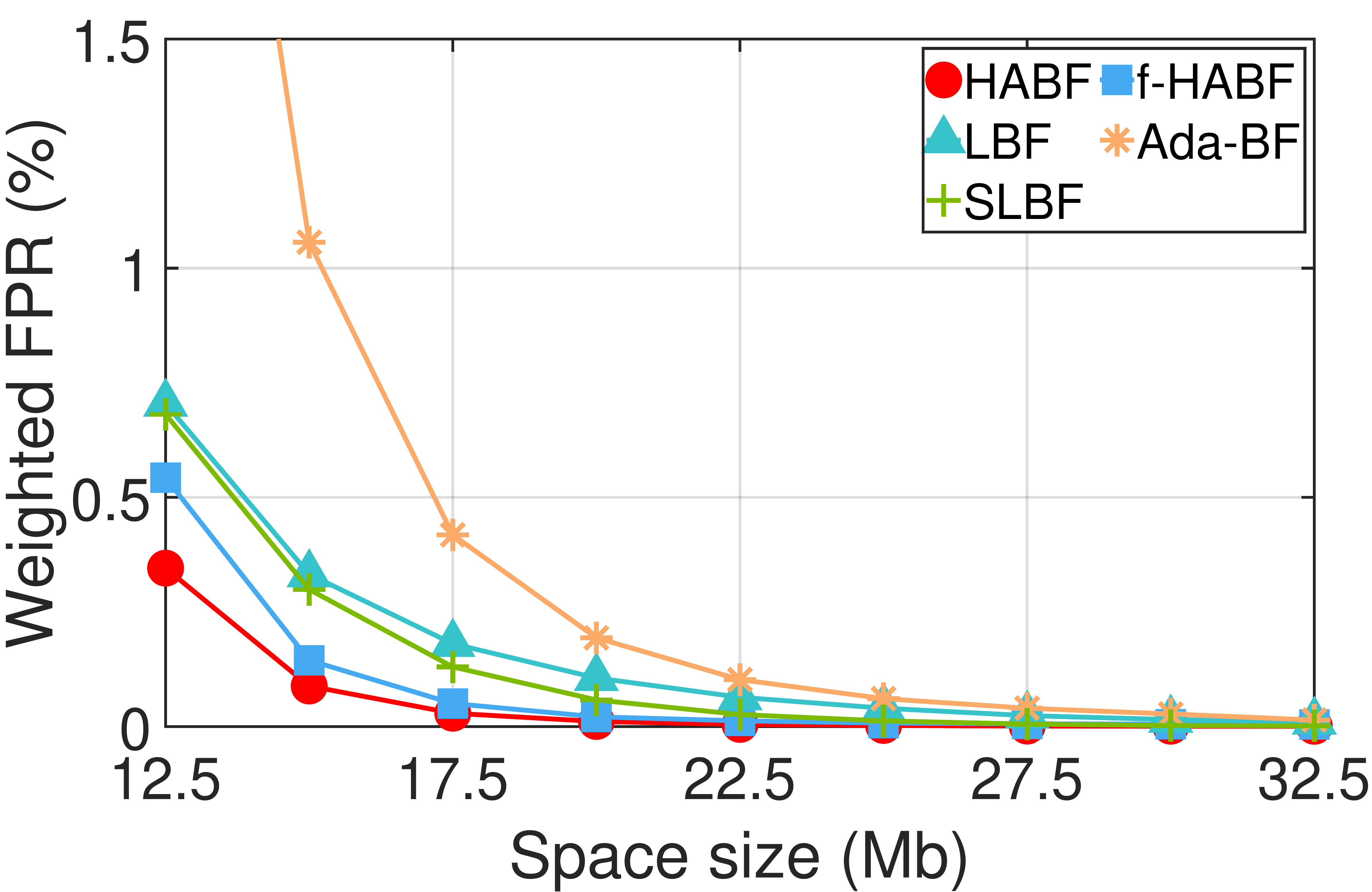}
\end{minipage}
}%
\centering
\vspace{-4mm}
\label{fig:fig11}
\caption{Weighted FPR on uniform distribution}
\vspace{-7mm}
\end{figure*}
\begin{figure*}[t]
\centering
\subfigure[vs. Non-learned filter (Shalla)]{
\begin{minipage}[t]{0.24\linewidth}
\centering
\label{fig:fig:12(a)}
\includegraphics[width=1\linewidth]{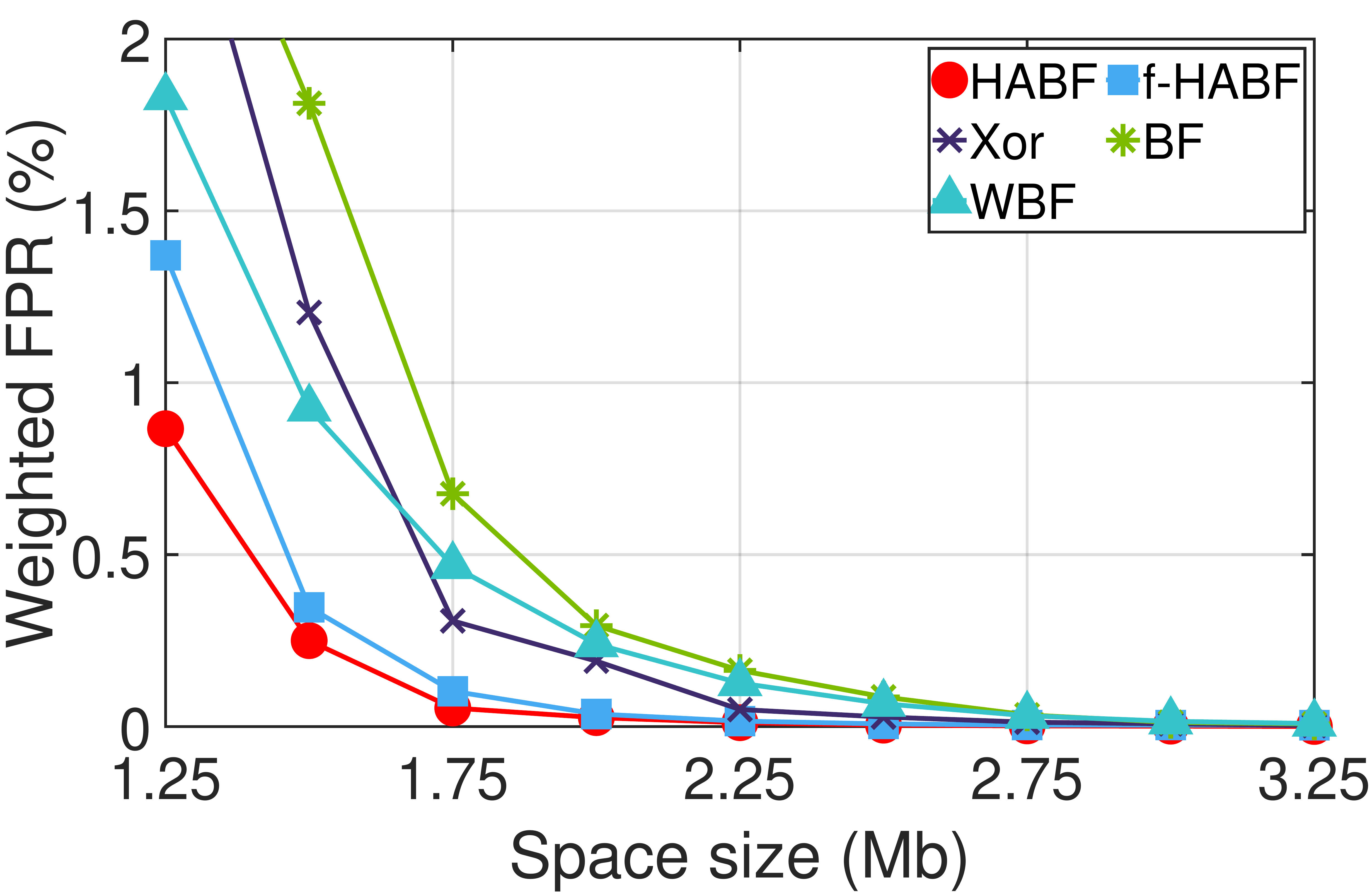}
\end{minipage}%
}%
\subfigure[vs. Learned filter (Shalla)]{
\begin{minipage}[t]{0.24\linewidth}
\centering
\label{fig:fig:12(b)}
\includegraphics[width=1\linewidth]{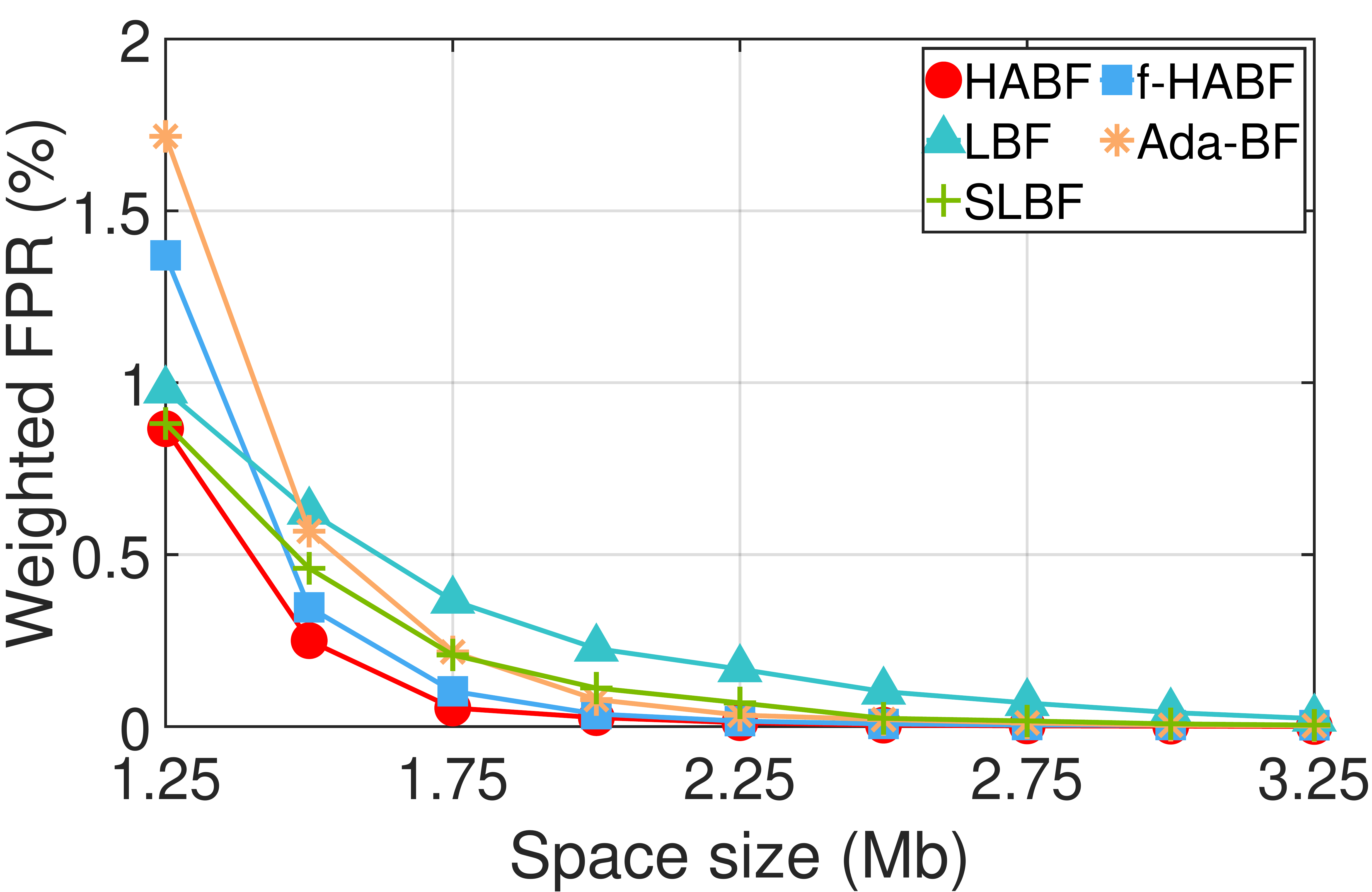}
\end{minipage}%
}%
\subfigure[vs. Non-learned filter (YCSB)]{
\begin{minipage}[t]{0.24\linewidth}
\centering
\label{fig:fig:12(c)}
\includegraphics[width=1\linewidth]{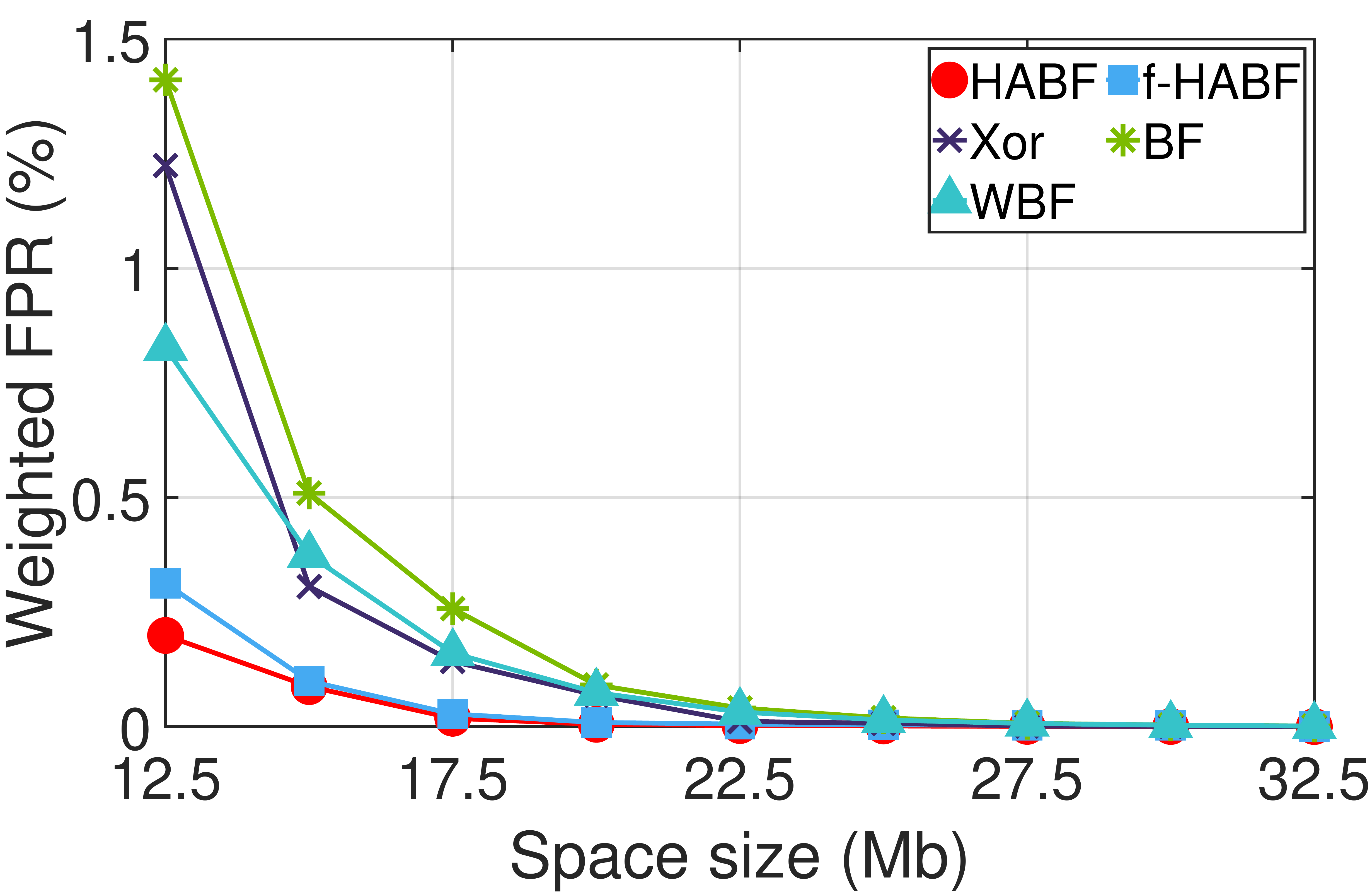}
\end{minipage}
}%
\subfigure[vs. Learned filter (YCSB))]{
\begin{minipage}[t]{0.24\linewidth}
\centering
\label{fig:fig:12(d)}
\includegraphics[width=1\linewidth]{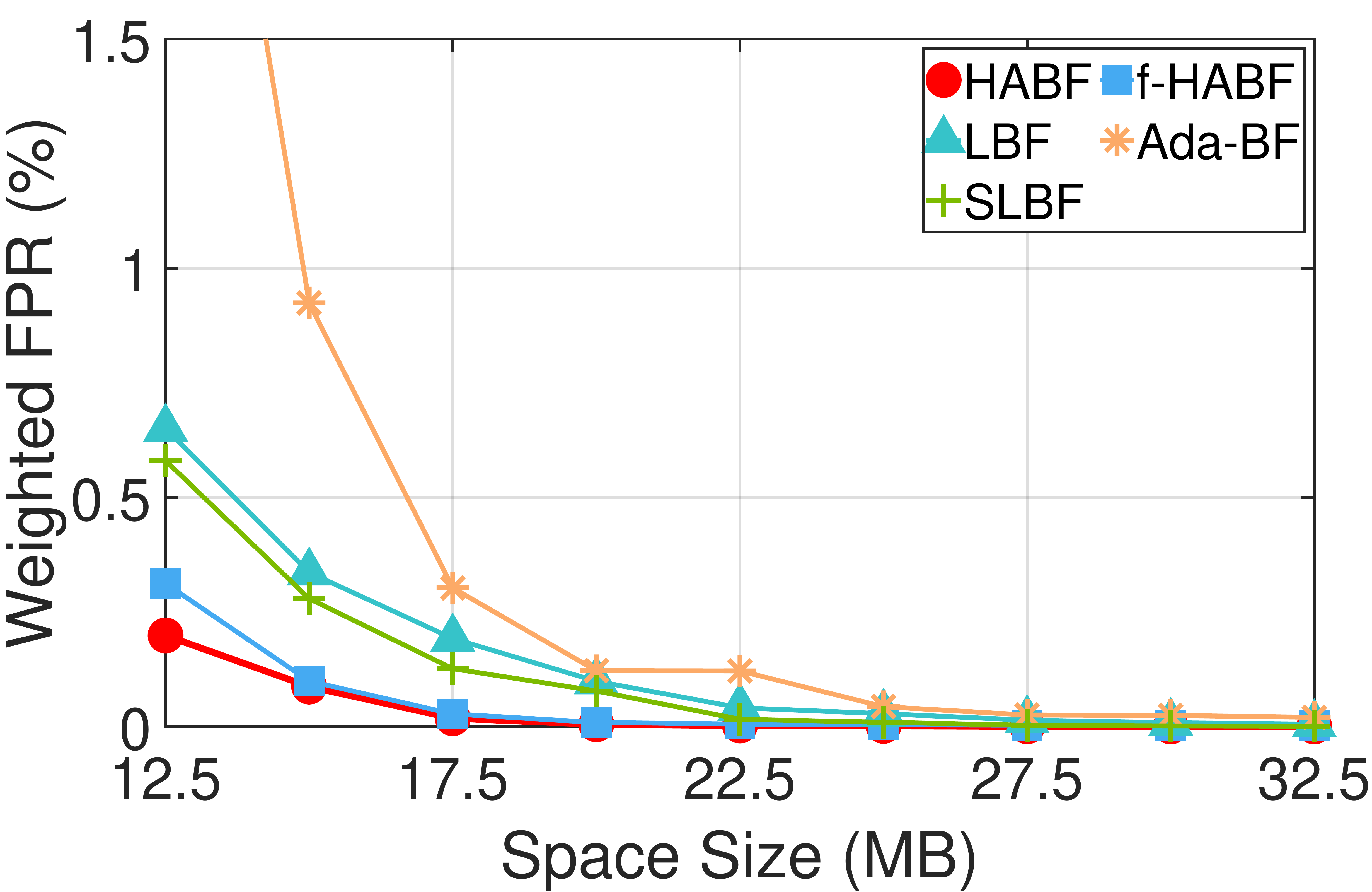}
\end{minipage}
}%
\centering
\vspace{-4mm}
\label{fig:fig12}
\caption{Weighted FPR on skewed distribution}
\vspace{-7mm}
\end{figure*}
\begin{figure*}[t]
\centering
\subfigure[Construction time (Shalla)]{
\begin{minipage}[t]{0.24\linewidth}
\centering
\label{fig:fig:13(a)}
\includegraphics[width=1\linewidth]{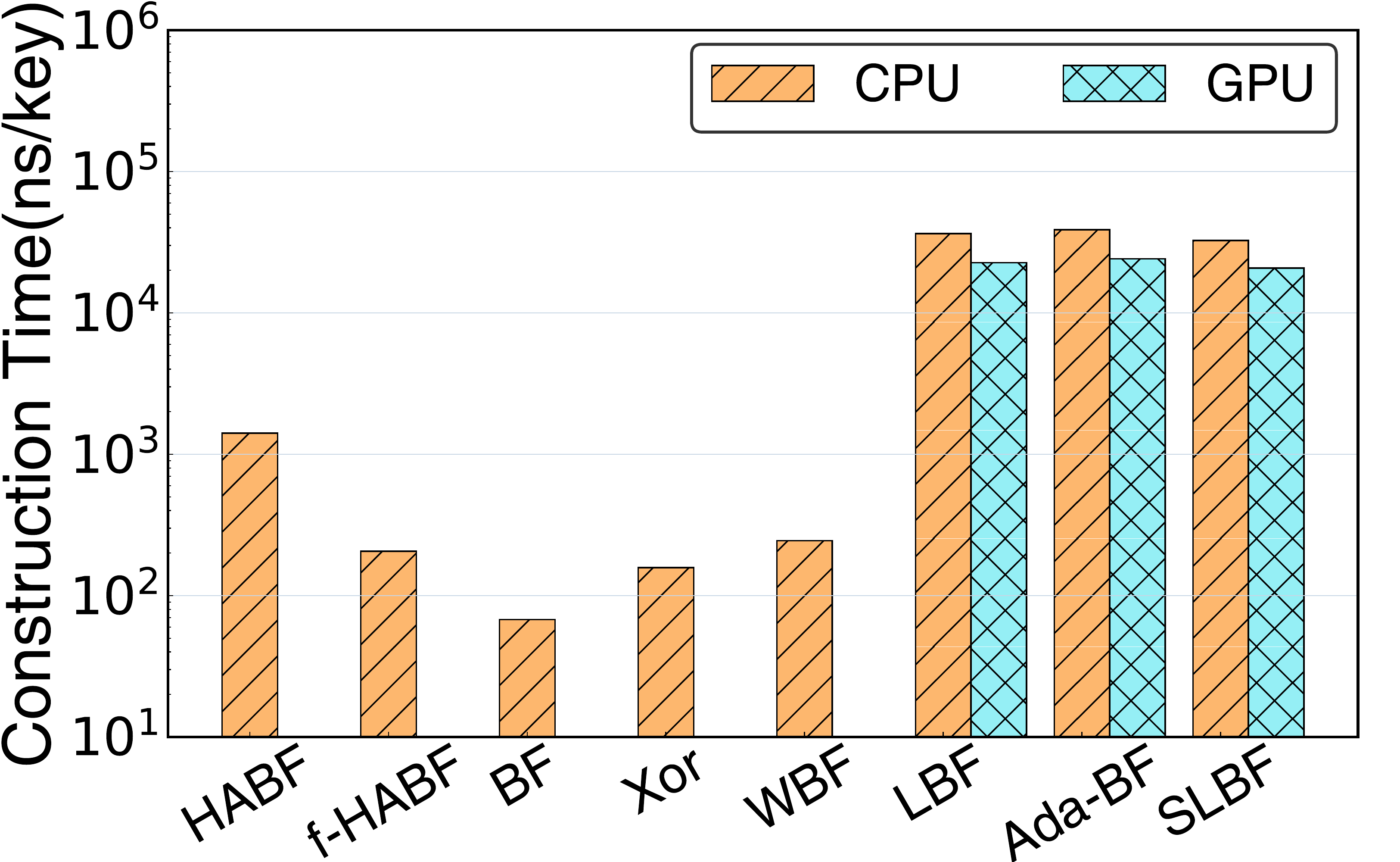}
\end{minipage}%
}%
\subfigure[Construction time (YCSB)]{
\begin{minipage}[t]{0.24\linewidth}
\centering
\label{fig:fig:13(b)}
\includegraphics[width=1\linewidth]{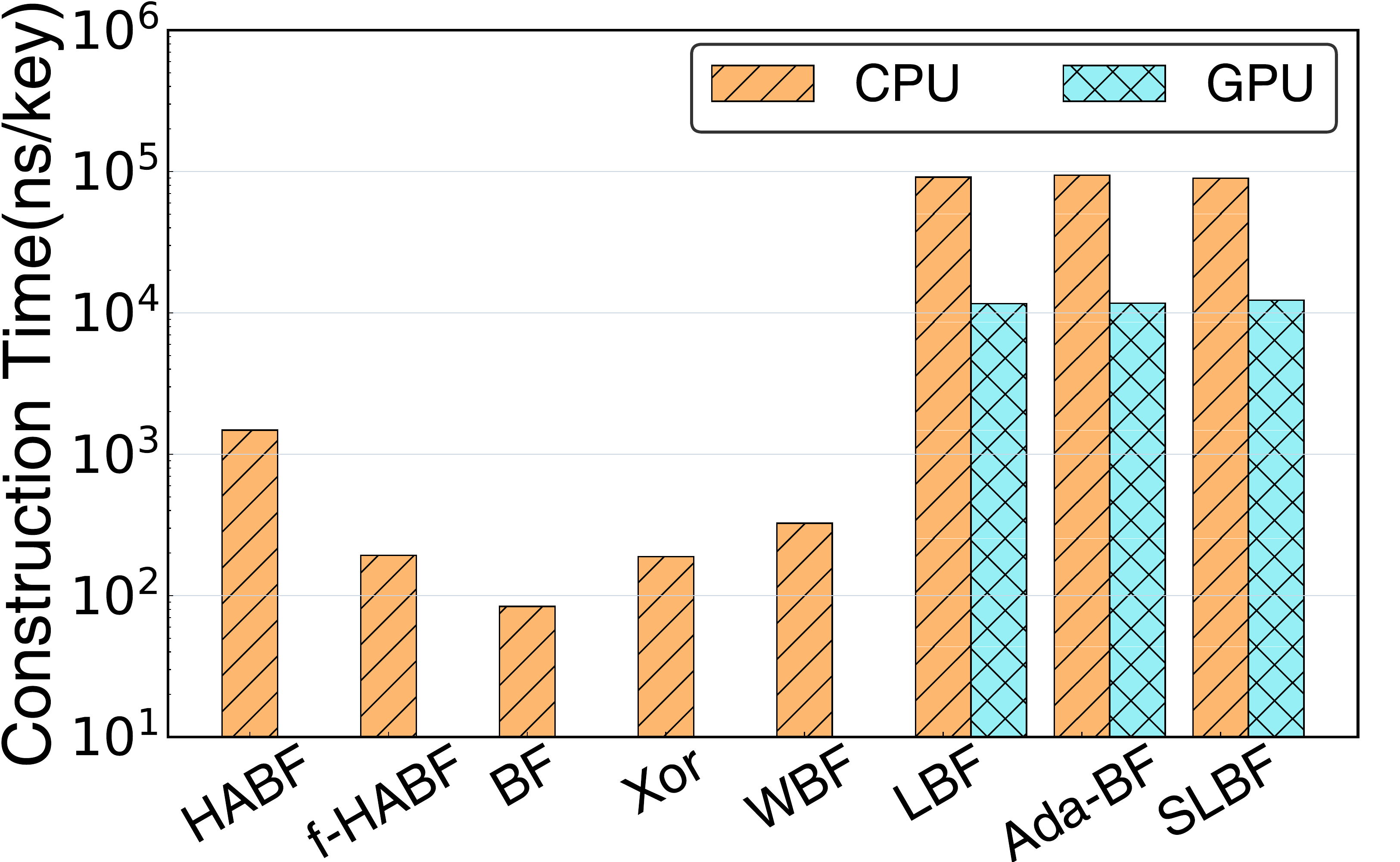}
\end{minipage}%
}%
\subfigure[Query time (Shalla)]{
\begin{minipage}[t]{0.24\linewidth}
\centering
\label{fig:fig:13(c)}
\includegraphics[width=1\linewidth]{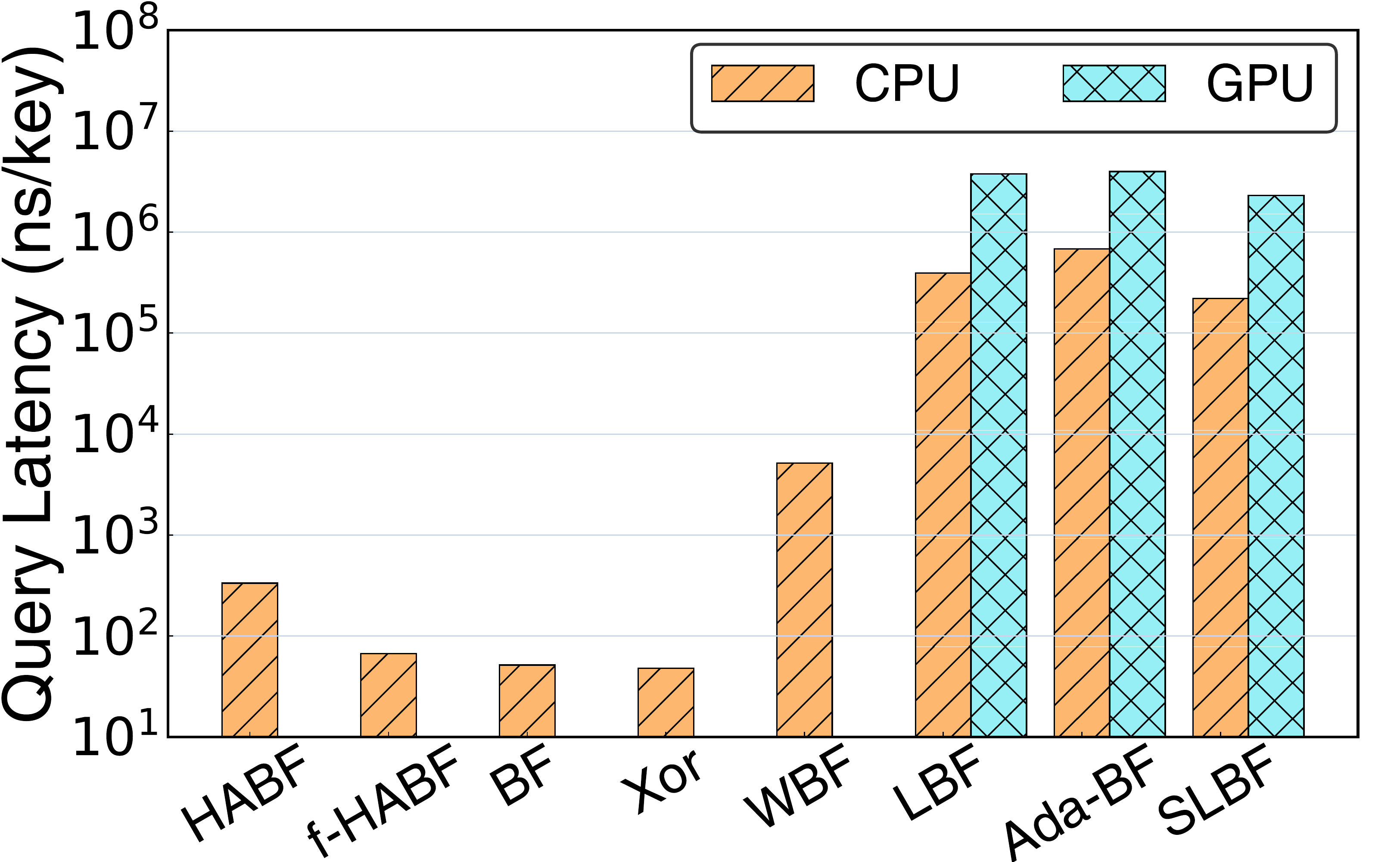}
\end{minipage}
}%
\subfigure[Query time (YCSB)]{
\begin{minipage}[t]{0.24\linewidth}
\centering
\label{fig:fig:13(d)}
\includegraphics[width=1\linewidth]{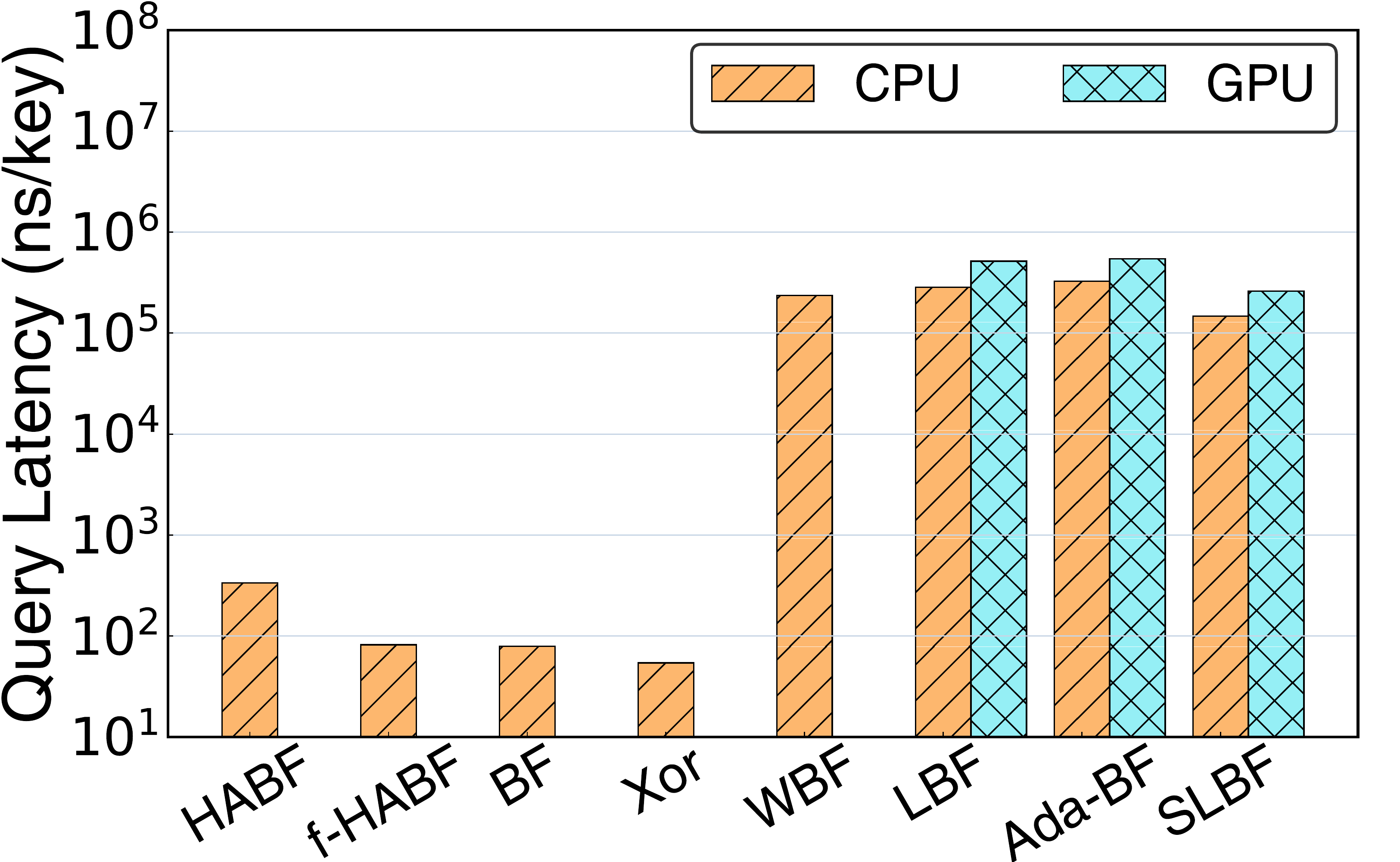}
\end{minipage}
}%
\centering
\label{fig:fig13}
\vspace{-4mm}
\caption{Construction and query time}
\vspace{-11mm}
\end{figure*}

\subsection{Weighted FPR vs. Space Under Uniform Distribution}
In this experiment, we set the cost distribution of datasets to be uniform.
According to the definition of weighted FPR in Equation (\ref{euqal:wFPR}), the value of cost for each key is normalized to $1$.
We compare the weighted FPRs of HABF and f-HABF with that of BF, Xor, LBF, Ada-BF, SLBF, and WBF.

1) \emph{When the key schema has evident characteristics, HABF will use less space if a low weighted FPR is required. }
For Shalla, we vary the space size from $1.25$MB to $3.25$MB.
As shown in Fig. \ref{fig:fig:11(a)}, HABF always outperforms the non-learned filters regarding weighted FPR with the same space size.
As shown in Fig. \ref{fig:fig:11(b)}, since the URL blacklist has evident characteristics, learned filters can use only a small space to correctly judge a large part of the keys.
At this time, learned filters will consume a small space to achieve the same weighted FPR.
But with lower requirements for weighted FPR, learned filters need more space than HABF.
When increasing the space size to $1.5$MB, the weighted FPR of BF, Xor, LBF, Ada-BF and SLBF is $1.73\%$, $1.56\%$, $0.54\%$, $0.51\%$, and $0.44\%$, respectively, while HABF achieves $0.36\%$ and f-HABF achieves $0.55\%$.

2) \emph{ When the key schema is approximately random, HABF has the smallest weighted FPR for all our space settings.} For YCSB, we vary the space size from $12.5$MB to $32.5$MB.
As shown in Fig. \ref{fig:fig:11(c)} and Fig. \ref{fig:fig:11(d)}, the weighted FPR of HABF changes from $3.46\times10^{-3}$ to $3.63\times 10^{-6}$, and the weighted FPR of f-HABF is around $1.5\times$ than HABF on average, while the weighted FPR of BF, Xor, LBF, Ada-BF, and SLBF change from $1.78\times10^{-2}$ to $2.83\times 10^{-5}$, $1.57\times10^{-2}$ to $1.59\times 10^{-5}$, $7.04\times10^{-3}$ to $1.08\times 10^{-4}$, $3.13\times10^{-2}$ to $1.42\times 10^{-4}$, and $6.81\times10^{-3}$ to $1.72\times 10^{-5}$, respectively.
The randomness of the key schema characteristics increases the difficulty of fitting ML model, and the performance of algorithms relying on the prediction score \cite{2019adaptive} of the ML model will be greatly affected.
%
%
There is a significant gap in performance between the two datasets for Ada-BF.
%
%
By adding a Bloom filter in the beginning to reduce the impact of ML model errors, the performance of SLBF will be less affected.
\begin{figure*}[t]
\end{figure*}
\subsection{Weighted FPR vs. Space Under Skewed Distribution}
In this experiment, we vary space size as in the previous experiment and set the cost distribution of datasets to be Zipf with skewness $1.0$.
Consequently, the weighted FPR is mostly contributed by the false positives of keys with high cost.

\emph{HABF always has the smallest weighted FPR under all the space settings.}
For Shalla, as shown in Fig. \ref{fig:fig:12(a)}, compared with non-learned filter, the weighted FPR of HABF changes from $8.67\times10^{-3}$ to $2.56\times 10^{-6}$ and the weighted FPR of f-HABF changes from $1.37\times10^{-2}$ to $3.86\times 10^{-6}$, while the weighted FPR of BF, Xor, and WBF change from $2.81\times10^{-2}$ to $7.49\times 10^{-5}$, $2.67\times10^{-2}$ to $2.74\times 10^{-5}$, and $1.83\times10^{-2}$ to $8.81\times 10^{-5}$, respectively.
As shown in Fig. \ref{fig:fig:12(b)}, compared with learned filter, the weighted FPR of LBF, Ada-BF, and SLBF change from $9.78\times10^{-3}$ to $2.3\times 10^{-4}$, $1.72\times10^{-2}$ to $2.13\times 10^{-5}$, and $8.81\times10^{-3}$ to $4.05\times 10^{-5}$, respectively.
It shows that HABF performs better under the skewed cost distribution.
For YCSB, as shown in Fig. \ref{fig:fig:12(c)} and Fig. \ref{fig:fig:12(d)}, compared with both non-learned and learned filters, the weighted FPR of HABF reaches the range from $1.99\times10^{-3}$ to $1.97\times 10^{-6}$. While for other algorithms, the lowest weighted FPR changes from $5.80\times10^{-3}$ to $5.14\times 10^{-6}$.
\vspace{-3mm}

\subsection{Effect of Skewness}

\begin{wrapfigure}{l}{0.23\textwidth}
  \vspace{-15pt}
  \begin{center}
   \includegraphics[width=0.25\textwidth]{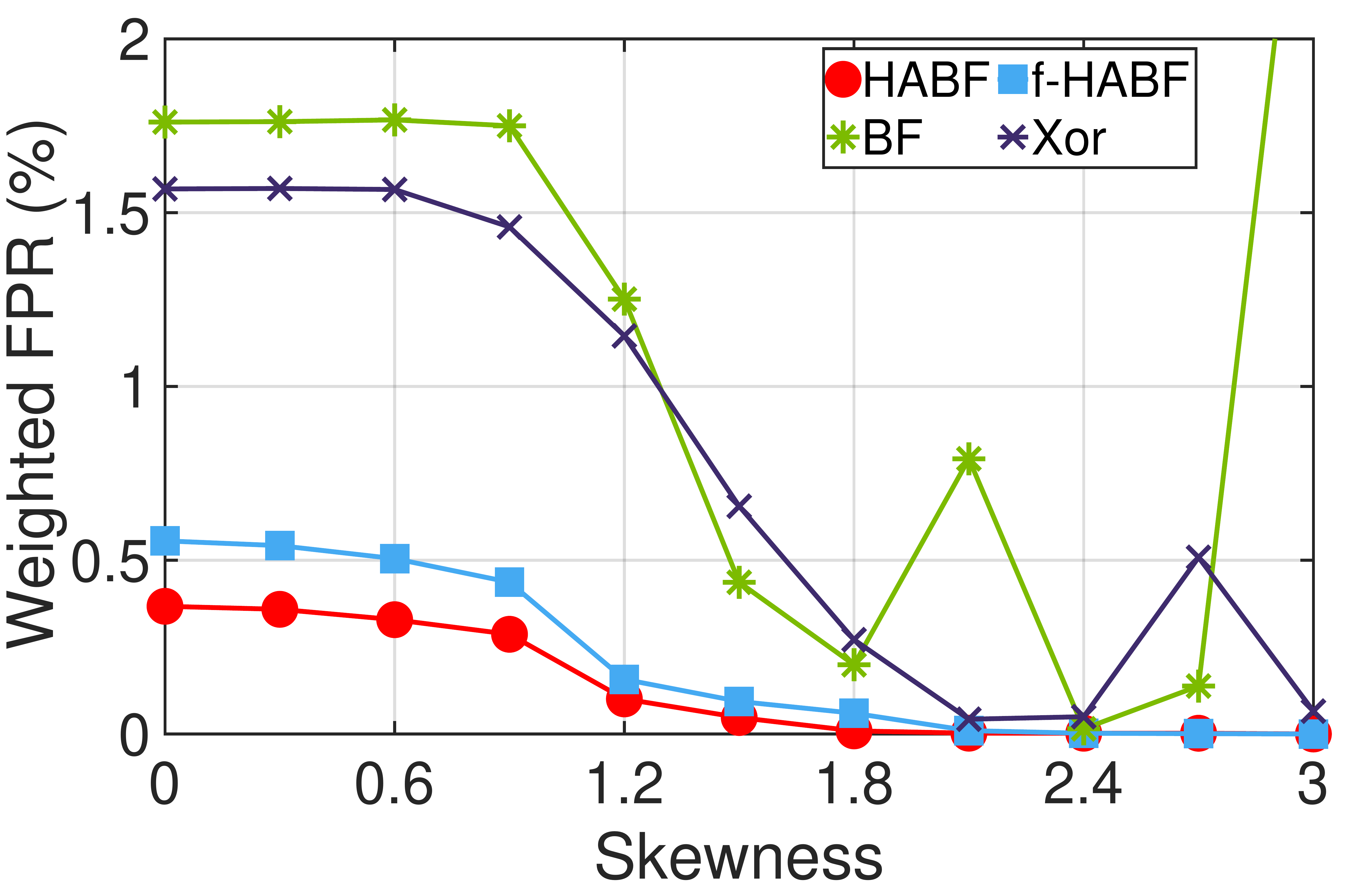}
  \end{center}
  \vspace{-15pt}
  \caption{Varying skewness}
    \label{fig:skewness}
  \vspace{-18pt}
\end{wrapfigure}
We further study how the skewness of dataset affects the weighted FPR as shown in Fig. \ref{fig:skewness}. Here we use Shalla dataset and set the space size to $1.5$MB, and we show how the weighted FPR changes as the skewness increases from $0.0$ to $3.0$ for HABF, f-HABF, BF, and Xor.
When the skewness is $0$, the weighted FPRs follows Fig. \ref{fig:fig:11(a)}.
When the skewness $\geq 0.9$, the weighted FPRs of HABF and f-HABF continue to decrease steadily but for BF and Xor, the weighted FPRs show great fluctuations.
The reason is that, as the skewness increases, once a key with high cost is misjudged, the weighted FPR increases a lot.
Therefore, BF and Xor hardly show any performance gain as they are insensitive to the cost distribution.
\begin{figure*}[t]
\begin{minipage}[t]{0.49\textwidth}
\subfigure[On uniform distribution (YCSB)]{
\begin{minipage}[t]{0.49\linewidth}
\centering
\label{fig:BFcompare(a)}
\includegraphics[width=1\linewidth]{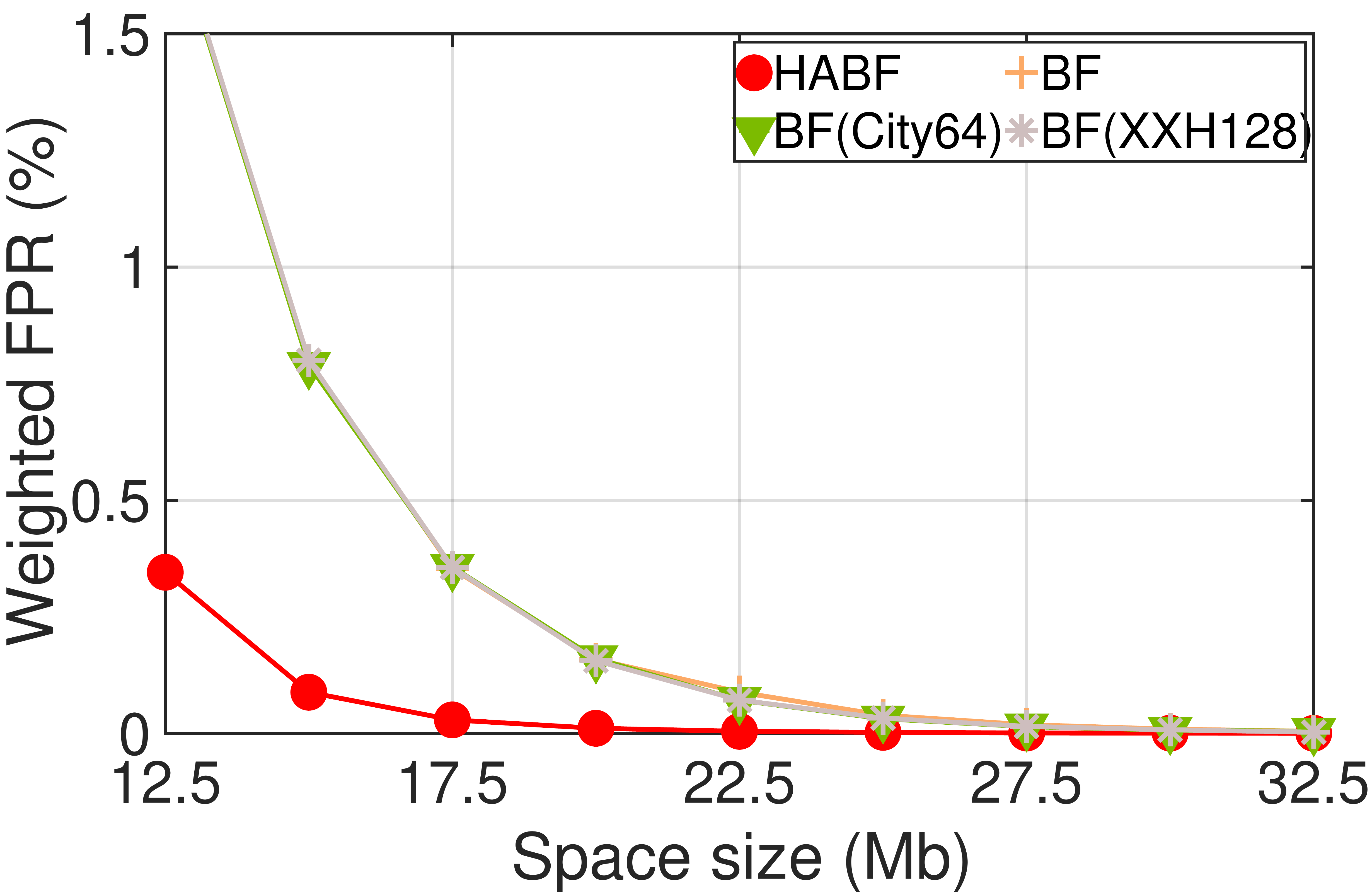}
\end{minipage}
}%
\subfigure[On skew distribution (YCSB)]{
\begin{minipage}[t]{0.49\linewidth}
\centering
\label{fig:BFcompare(b)}
\includegraphics[width=1\linewidth]{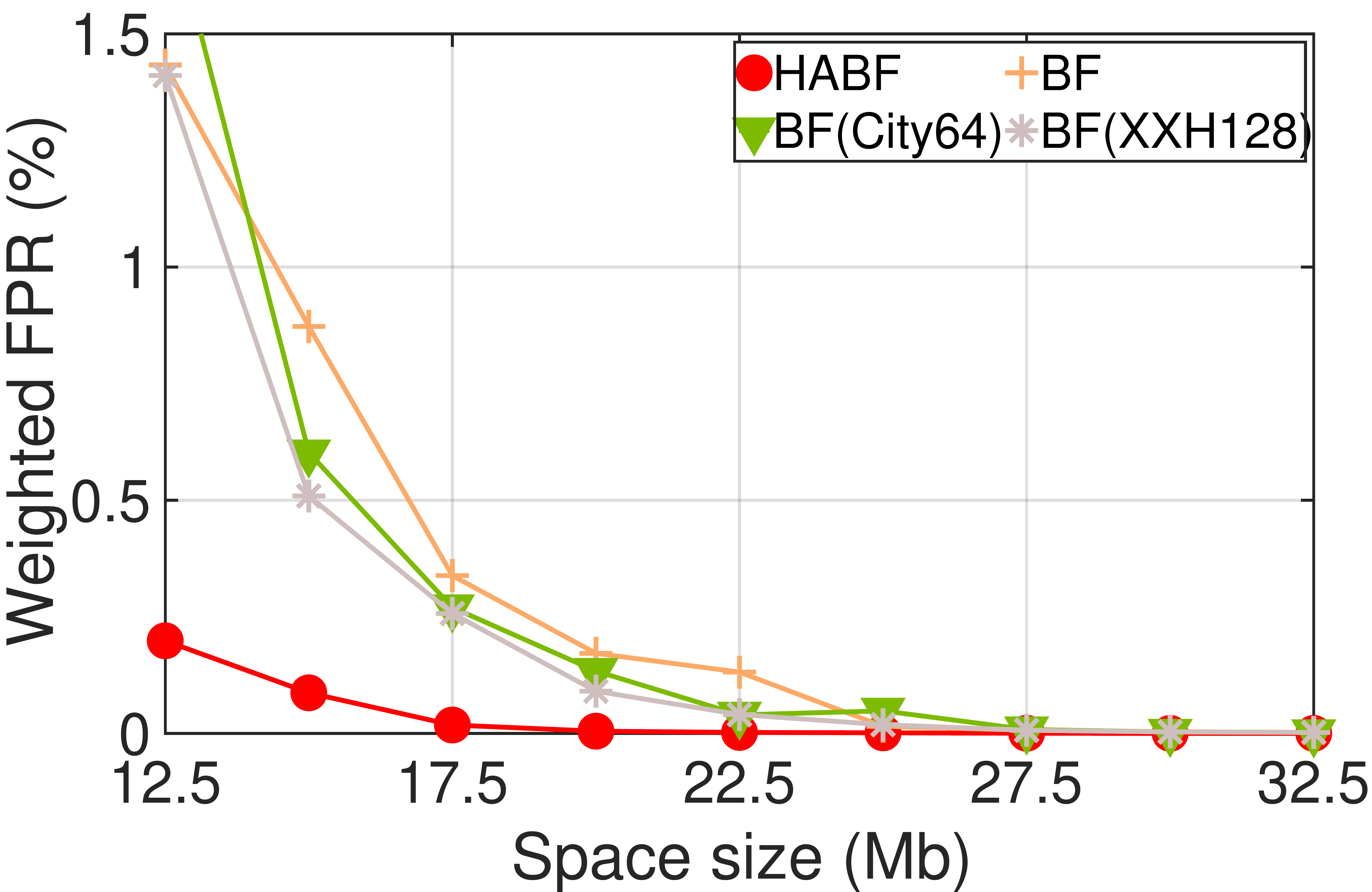}
\end{minipage}
}%
\vspace{-3mm}
\caption{Bloom filter with different implementations}
\label{fig:BFcompare}
\end{minipage}
\begin{minipage}[t]{0.49\textwidth}
\subfigure[Shalla]{
\begin{minipage}[t]{0.49\linewidth}
\centering
\label{fig:fig:14(a)}
\includegraphics[width=1\linewidth]{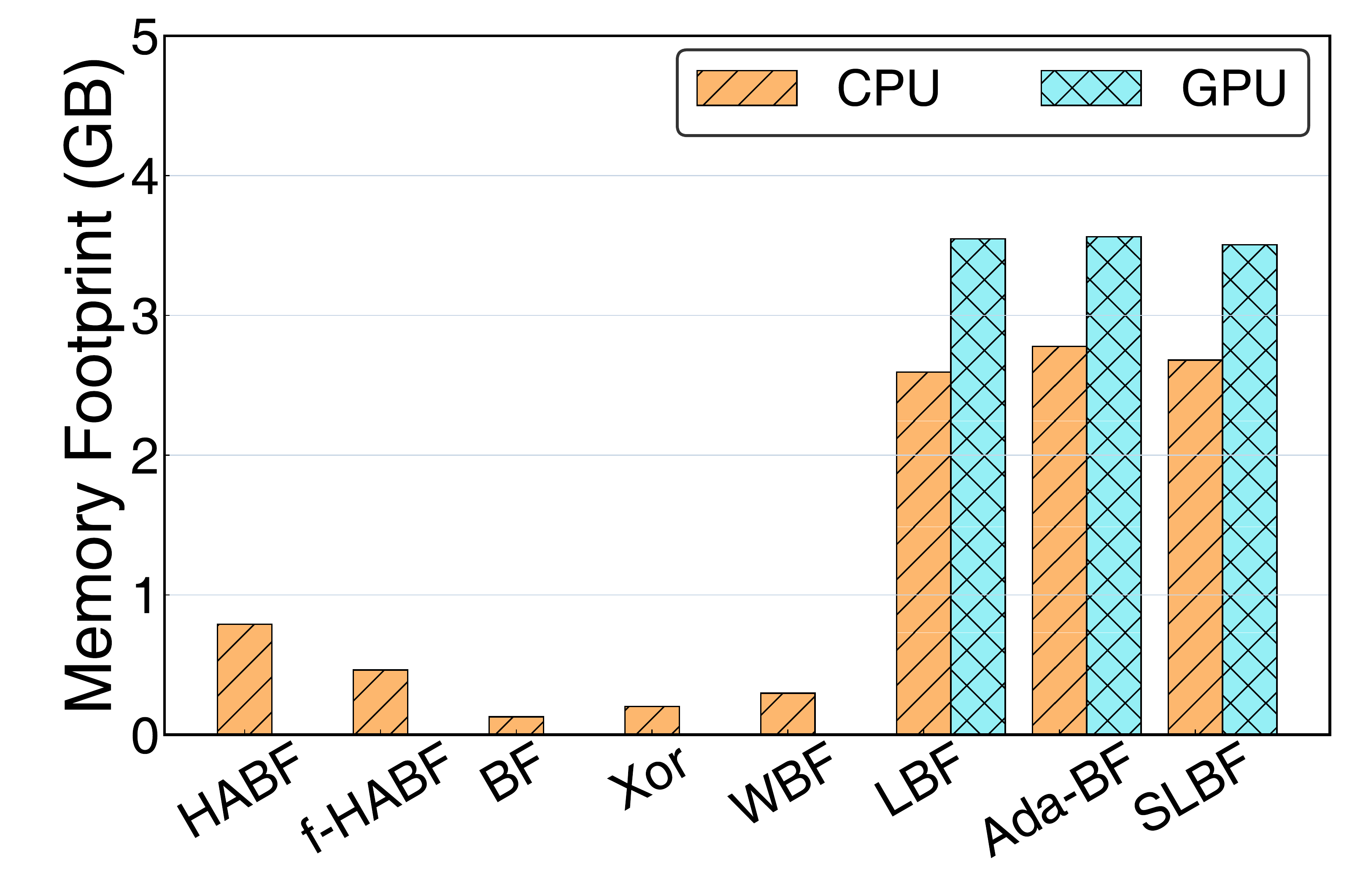}
\end{minipage}
}%
\subfigure[YCSB]{
\begin{minipage}[t]{0.49\linewidth}
\centering
\label{fig:fig:14(b)}
\includegraphics[width=1\linewidth]{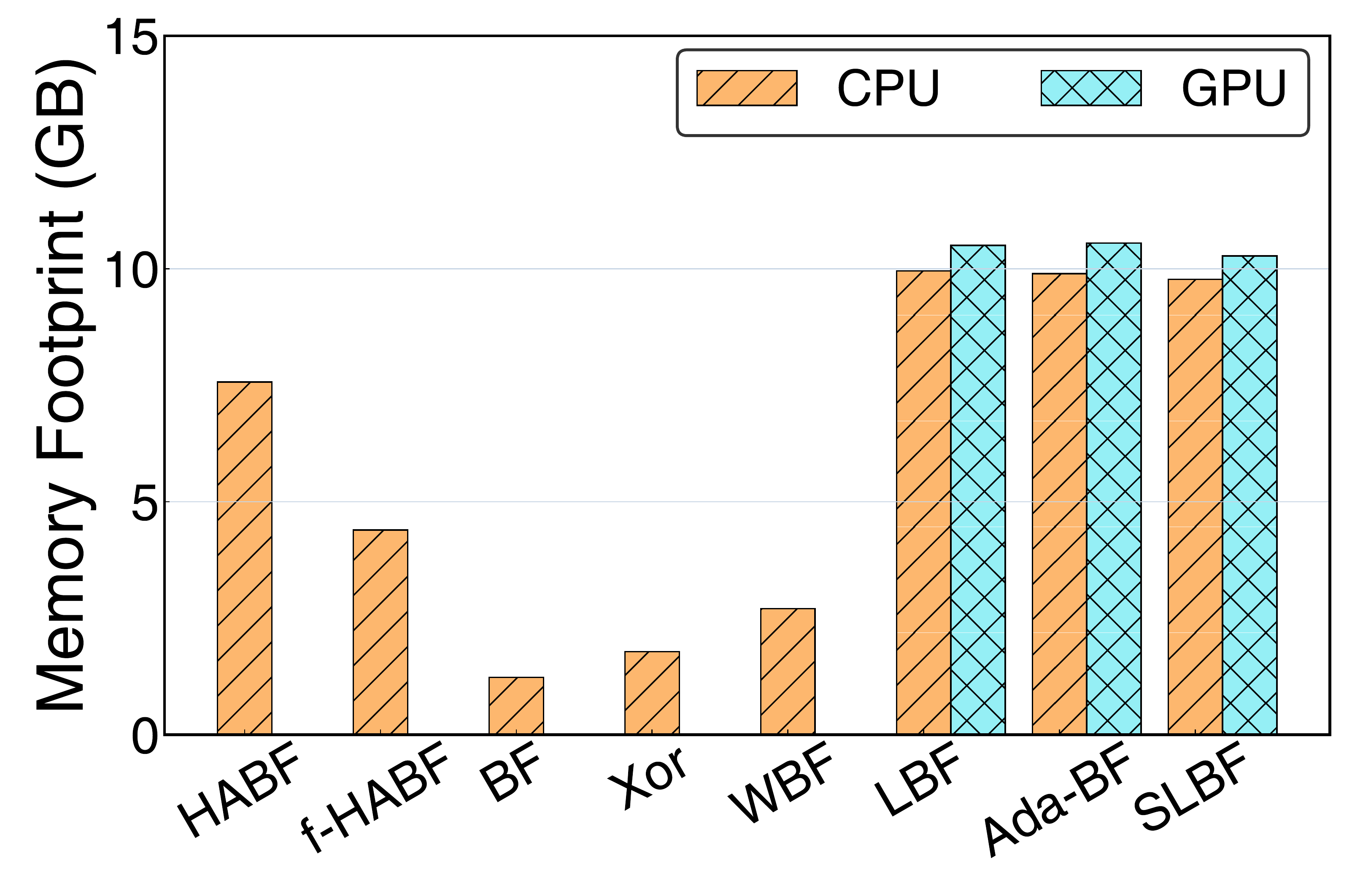}
\end{minipage}
}%
\vspace{-3mm}
\caption{Memory footprint of construction}
\end{minipage}
\vspace{-8mm}
\end{figure*}

\subsection{Discussion for Bloom filter with different implementations}
Since the performance of the Bloom filter will be affected by different implementations of the hash function.
We implement three versions of Bloom filter: BF by using $k$ different hash functions in Table~\ref{Tab:hashfunc}.
BF (City64) by using CityHash (64bit version) and BF (XXH128) by using xxHash (128bit version).
For the latter two implementations, we use different seeds to generate $k$ hash values for accuracy.
As shown in Fig.~\ref{fig:BFcompare}, the dataset is set to YCSB, under the uniform distribution, since the cost of each key is the same, the three versions of the Bloom filter ({\it i.e.} BF, BF (City64) and BF (XXH128)) are nearly consistent.
Under skewed distribution, we set the cost distribution to skewness $1.0$, all Bloom filter implementations have fluctuated.
It demonstrates that even with advanced hash functions like City64 and XXH128, they still can't effectively reduce weighted FPR and are not sensitive to the skew cost distribution.
\subsection{Construction and Query Time}
In this part, we compare the construction time and the query time in nanoseconds per key. We fix the space size for each algorithm, \ie, $1.5$MB for Shalla and $15$MB for YCSB.

1) \emph{The construction time per key of HABF and f-HABF are around $19\times$ and $2.7\times$ larger than that of BF, respectively.}
On the Shalla dataset, as shown in Fig. \ref{fig:fig:13(a)}, for HABF, the construction time per key is $1$,$411ns$; for f-HABF, it is around $205ns$; for BF, it is around $68ns$; for Xor, it is around $158ns$; for WBF, it is around $245ns$; while for learned filters, CPU-based LBF, Ada-BF, and SLBF are around $36$,$430ns$, $38$,$743ns$, and $32$,$470ns$, respectively.
LBF (GPU), Ada-BF (GPU), and SLBF (GPU) take $25$,$686 ns$, $24$,$123 ns$, and $20$,$728 ns$, respectively.
On the YCSB dataset, as shown in Fig. \ref{fig:fig:13(b)}, HABF, f-HABF, BF, Xor ,and WBF take $1$,$480 ns$, $193 ns$, $84 ns$, $188 ns$, and $325 ns$, respectively.
LBF (GPU), Ada-BF (GPU), and SLBF (GPU) take $11$,$636 ns$, $11$,$730 ns$, and $12$,$300 ns$, respectively, while the construction time of CPU-based learning models is all above $90$,$000 ns$.
The construction of the learning model is highly dependent on GPU especially for massive data, but for the machines without GPU, the application of learning models is heavily limited.
Our fast version, \ie, f-HABF achieves the same order of construction speed as BF and Xor.
%

2) \emph{The query time of HABF and f-HABF per key is around $5.35\times$ and $1.15\times$ than that of BF, respectively.}
Fig. \ref{fig:fig:13(c)} and \ref{fig:fig:13(d)} show the average query time of all algorithms in two datasets.
For Shalla, to query one key, HABF, f-HABF, BF, and Xor take $338 ns$, $67 ns$, $52ns$, and $48ns$, respectively. For YCSB, HABF takes $336 ns$; f-HABF, BF, and Xor take $82 ns$, $79 ns$, and $54 ns$, respectively.
This result indicates the potential of the application of HABF in real-time query scenarios.
The query time of LBF, Ada-BF, and SLBF are all above $500\times$ larger than that of HABF due to computational complexity of ML model, and using GPU to query a key may increase the query time due to the transmission of data between CPU and GPU.
For WBF, it will traverse the cached cost list when querying a key, which shows that WBF will lead to poor query performance with the size of the cost list increasing.
\subsection{Construction Memory Consumption}
In this part, we fix the space size of each algorithm \ie, $1.5$MB for Shalla and $15$MB for YCSB, and compare the CPU memory footprint during construction.
Moreover, we also give CPU memory usage for some algorithms using GPU; note that here we allocate all memory of two GPUs to these algorithms.

\emph{The construction memory consumption of HABF and f-HABF is around $6.1\times$ and $3.6\times$ than that of BF, which is lower than all learning models.}
For Shalla, as shown in Fig. \ref{fig:fig:14(a)}, HABF, f-HABF, BF, Xor and WBF consume $0.79$GB, $0.46$GB, $0.13$GB, $0.20$GB and $0.58$GB, respectively.
LBF, Ada-BF, and SLBF consume $2.59$GB, $2.78$GB, and $2.68$GB, respectively.
Due to the process of dealing and loading data to GPU, LBF (GPU), Ada-BF (GPU), and SLBF (GPU) consume more CPU memory which is $3.55$GB, $3.56$GB, and $3.51$GB, respectively.
For YCSB, as shown in Fig. \ref{fig:fig:14(b)}, HABF, f-HABF, BF, Xor and WBF consumes $7.569$GB, $4.394$GB, $1.23$GB, $1.781$GB, and $2.708$GB, respectively.
CPU-based and GPU-based learning models consume memory of $9.88$GB and $10.44$GB on average, respectively.
The reason for extra memory for HABF during construction is that HABF needs to maintain negative keys and two runtime auxiliary data structures.


\section{Conclusion}
\label{sec:conclusion}

In this paper, we study the problem of how to customize the hash functions for positive keys to minimize the overall cost of the misidentified negative keys when the information of negative keys and their costs are available.
%
%
We propose a novel framework Hash Adaptive Bloom Filter (HABF), which consists of a standard Bloom filter, and a novel lightweight hash table named HashExpressor for storing the customized hash functions.
Then, at query time, to provide a one-side error guarantee, HABF follows a two-round pattern to check whether a key is in the set.
Besides, to optimize hash function selections for positive keys, a greedy-based but performance-bounded TPJO algorithm is proposed.
%
%
Extensive experiments show that HABF outperforms the standard Bloom filter and its variants on the whole in terms of accuracy, construction time, query time, and memory space consumption. 
\section*{Acknowledgment}
\vspace{-1mm} 
\label{sec:acknowledgment}
We thank the reviewers for their thoughtful suggestions. 
This work was supported in part by the National Natural Science Foundation of China under Grant 61872178, 
in part by the Natural Science Foundation of Jiangsu Province under Grant No. BK20181251, 
in part by the open research fund of Key Lab of Broadband Wireless Communication and Sensor Network Technology (Nanjing University of Posts and Telecommunications), Ministry of Education, 
in part by the Key Research and Development Project of Jiangsu Province under Grant No. BE2015154 and BE2016120,
in part by the National Natural Science Foundation of China under Grant 61832005, 
and 61672276, 
in part by the Collaborative Innovation Center of Novel Software Technology and Industrialization, Nanjing University, 
in part by the Jiangsu High-level Innovation and Entrepreneurship (Shuangchuang) Program,  
and in part by the National Natural Science Foundation of China (NO.62072230, U1811461) and Alibaba Innovative Research Project. 
\vspace{-2mm} 
}

\bibliographystyle{IEEEtran}

\bibliography{ICDE2020HABF.bib}
\end{document}